\newtheorem{thm}{Theorem}
\newtheorem{defn}{Definition}
\newtheorem{prop}{Property}
\newtheorem{lem}{Lemma}
\newcommand{\defequiv}{\mbox{\raisebox{-.3ex}{$\overset{\vartriangle}{=}$}}}
\begin{document}

\title{Channel Assignment in Dense MC-MR Wireless Networks: Scaling Laws and Algorithms} 

\author{Rahul Urgaonkar, Ram Ramanathan, Jason Redi, William N. Tetteh \\
Network Research, Raytheon BBN Technologies, Cambridge, MA 02138 \\
Email: \{rahul, ramanath, redi, wtetteh\}@bbn.com}

\maketitle
\begin{abstract}
 We investigate optimal channel assignment  algorithms that
maximize per node throughput in dense multi-channel multi-radio
(MC-MR) wireless networks. Specifically, we consider
an MC-MR network where all nodes are within the
transmission range of each other.
This situation is encountered in many real-life settings such as 
students in a lecture hall, delegates attending a conference, or soldiers in a battlefield.
In this scenario, we show that intelligent assignment of the available 
channels results in a significantly higher per node throughput. 
We first propose a class of channel assignment algorithms, parameterized by $T$ (the number of transceivers per node), 
that can achieve $\Theta\big(\frac{1}{N^{1/T}}\big)$ per node throughput using $\Theta(TN^{1-\frac{1}{T}})$ 
channels.  In view of practical constraints on $T$, we then propose
another algorithm that can achieve $\Theta\big(\frac{1}{(\log_2 N)^2}\big)$ 
 per node throughput using only two transceivers per node.
Finally,  we identify a fundamental relationship between 
the achievable per node throughput,
the total number of channels used, and
the network size  under any strategy. 
Using analysis and simulations, we show that our algorithms 
achieve close to optimal
performance at different operating points on this curve. 
Our work has several interesting implications on the optimal 
network design for dense MC-MR wireless
networks.
\end{abstract}

\section{Introduction}
\label{section:intro}

Starting with the seminal work in \cite{GK}, there has been a lot of research on studying 
optimal scaling laws for wireless ad hoc networks under different assumptions about node capabilities \cite{aeron, ozgur}, 
availability of infrastructure support \cite{infra, ulas}, mobility \cite{mobility, neely}, channel models \cite{gaussian}, traffic models \cite{luna}, etc.
Given the vast literature in this area, we
refer the interested reader to the excellent surveys \cite{NOW1, NOW2} for an overview of this work.

One practical way to improve the capacity of wireless ad hoc networks is to deploy  MC-MR networks. 
The resulting capacity gains 
are mainly due to the use of additional channels as compared to the single channel model of \cite{GK}.
This has attracted much attention recently and several works study the problem of channel assignment, routing, and scheduling 
in general MC-MR networks under different constraints \cite{kodialam}, \cite{jsac}, \cite{xlin}. 
The work in \cite{kyasanur} extends the analytical framework of
 \cite{GK} and derives scaling laws for general MC-MR networks. It is shown in  \cite{kyasanur} that, depending on the number of
 interfaces per node and the number of available channels, there can be a degradation in the network capacity.

Much of this work on scaling laws has focused on general ad hoc networks where it is assumed that the nodes are randomly distributed in an area and that
multi-hop routing is necessary for end-to-end communication. 
However, one scenario that has received little attention is where all nodes are within the transmission range of each other. 
This scenario, which we refer to as the \emph{Parking Lot} model, 
is encountered in many real-life settings. 
Examples include students in a lecture hall, delegates attending a conference, or a platoon of soldiers in a battlefield.
Because of its prevalence, it is important to study this model in more detail and understand fundamental limits on its performance.

\begin{figure}
\centering
\includegraphics[height=6.9cm, width=6.5cm, angle=0]{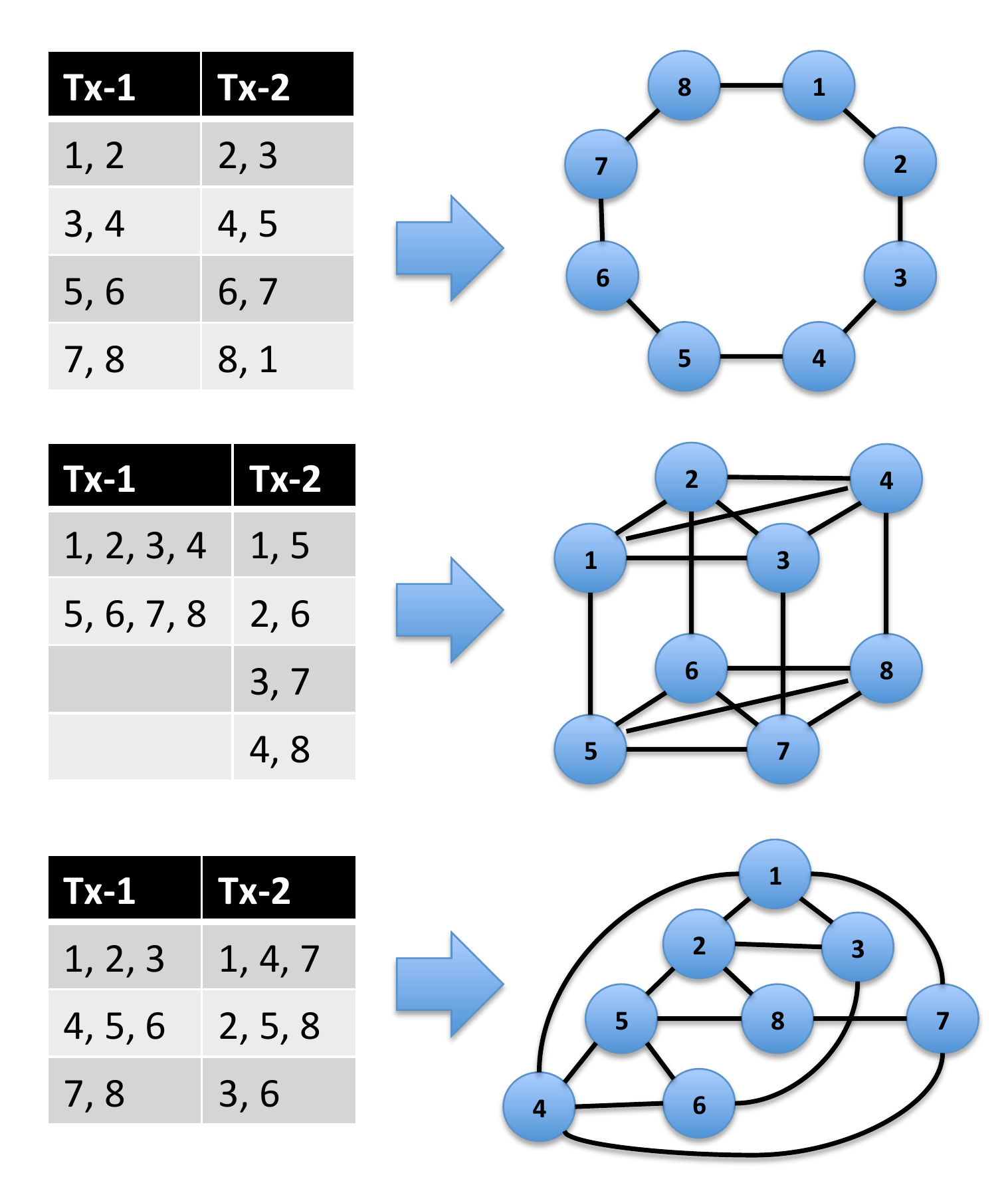}
\caption{Three possible channel assignments in a Parking Lot network of $8$ nodes with $2$ transceivers per node. 
For each transceiver (Tx-1 and Tx-2), the entries in the tables show the list of nodes that share an orthogonal 
channel on that transceiver. The resulting effective topology is shown on the right.}
\label{fig:example_assignment}
\end{figure}

The Parking Lot model is not very interesting when considering single-channel wireless networks because
in this case there is only one feasible solution. 
To ensure connectivity, all transceivers must use the same channel. This results in every node being one hop away from every other node. 
However, when nodes are equipped with multiple transceivers and the network has multiple available channels, 
the problem of channel assignment to different transceivers to maximize per node throughput becomes non-trivial. 
In this case, to increase network capacity, the transceivers must be assigned to different channels. 
This can result in a network graph that effectively has a multi-hop topology even though
all nodes are within each other's transmission range and could reach each other in one hop if they were using the same channel.
Further, by choosing different assignments, a variety of such effective network topologies can be 
formed. For example, Fig. \ref{fig:example_assignment} shows
three possible channel assignments to the transceivers in an $8$ node 
Parking Lot network with $2$ transceivers per node. Also shown is the resulting topology for each assignment. 
Note that these are only three out of many possible channel assignments for this network.

This example illustrates that the presence of multiple transceivers adds a ``degree of freedom'' and expands the network 
design space considerably. Now the problem of throughput maximization involves a \emph{joint} optimization of the channel assignment, 
routing, and scheduling strategies. This is subject to the following tradeoff: 
Given a fixed number of transceivers per node, when a channel assignment scheme uses more channels,
it allows for more concurrent transmissions.
However, the path between any two nodes tends to become 
longer, leading to more relay transmissions and reduced per node throughput. 
To keep the path lengths small, more nodes should share a channel. 
However, this reduces throughput as at most one node can transmit per channel at any time. 
This is because spatial reuse of frequency channels is not possible in the Parking Lot model. 

Thus, the key design objective is to maximize the number of concurrent transmissions while keeping the path lengths small.
However, this by itself is not sufficient. 
For example, one could assign the first transceiver of all nodes to the same channel and
form many small groups using the remaining transceivers. 
With this assignment, all nodes are within one hop of each other while many concurrent transmissions are possible. 
However, the common channel of the first transceiver becomes a bottleneck. What is needed is to ensure that there is sufficient capacity for
\emph{all} flows in the network.

In this paper, we study the problem of optimal channel assignment, routing, and scheduling in an MC-MR Parking Lot network
where the goal is to maximize the uniform per node throughput. 
In addition, we are interested in characterizing the \emph{scaling} behavior of per node throughput as a function of network size.  
We focus on \emph{static} channel assignment algorithms where a particular assignment, once assigned,
is used for a reasonably long period of time. This is motivated by the practical limitations and resulting overheads associated 
with dynamic channel switching. In this setting, there is rich design space involving the number of available channels,
number of transceivers per node and the network size that we explore.
Our main contributions are:

\begin{itemize}
\item We propose a class of algorithms parameterized by $T$, the number of transceivers per node, 
that can achieve $\Theta\big(\frac{1}{N^{1/T}}\big)$ per node throughput using $\Theta(TN^{1-\frac{1}{T}})$ 
channels in the Parking Lot model. Thus, an algorithm in this class can achieve $\Theta(1)$ per node throughput if $T = \Theta(\log_2 N)$.
 
 \item In view of practical constraints on $T$, we propose
another algorithm that can achieve $\Theta\big(\frac{1}{(\log_2 N)^2}\big)$ 
 per node throughput using only two transceivers per node.
 
 \item We identify a fundamental relationship between the
network size, the total number of channels used, and the
achievable per node throughput under any strategy.  
Our algorithms 
are shown to achieve close to optimal
performance at different operating points on this curve.
\end{itemize}

The rest of the paper is organized as follows. In Sec. \ref{section:network_model}, we present the network model and assumptions along with a discussion of
our analytical approach. Sec. \ref{section:hint} and \ref{section:towards} analyze two channel assignment schemes that can achieve progressively higher
per node throughput by using more channels. 
In Sec. \ref{section:tradeoff}, we establish a fundamental relation between the number of channels used, network size and
per node throughput under {any} scheme. We also define a common metric by which to evaluate the performance of a scheme. 
This allows us to bound the performance of the schemes studied in  Sec. \ref{section:hint} and \ref{section:towards} against the best possible scheme. 
In Sec. \ref{section:eval}, we compare the performance of our channel assignment schemes along with some other schemes using packet-level simulations.
Finally, we discuss the implications of our results on the optimal network architecture for dense MC-MR networks in Sec. \ref{section:implications}.

\section{Network Model}
\label{section:network_model}

\emph{Channel Model}: We consider a wireless network of $N$ nodes, indexed $1, 2, \ldots, N$, where all nodes are within the transmission range of each other.
The network has a total of $F$ orthogonal frequency channels, each of bandwidth $B$ Hz.
Every node in the network is assumed to have $T \geq 2$ transceivers. {As noted earlier, the $T=1$ case has a trivial solution.}
Each transceiver can be assigned at most one channel at any time. 
We assume that the transceivers operate in the half-duplex mode where they can either transmit or receive (but not both) on their channel at any time. 
Further, we assume a collision model for interference so that at most one transceiver can transmit successfully on a channel at any time. 
The link level transmission rate per channel is assumed to be the same for all channels and is given by $R$ bps.

\emph{Traffic Model}: We consider the $N$ source-destination pair random unicast model where 
every node is the source of one unicast session destined to another node that is chosen uniformly at random.
Each source generates packets of size $D$ bits at a rate given by $\lambda$ packets/sec. 
We assume that the source-destination pairings are not known a priori. 
Note that if the pairings were known,
then one can optimize the channel assignments and routing with respect to them. 
This would yield higher per node throughput. 
However, our focus is on the case where the traffic patterns are not known a priori and/or change rapidly. 
Under these assumptions, this traffic model is equivalent to the model where each node generates packets
at rate $\lambda$ packets/sec and each packet is equally likely to be destined to any other node. We are
interested in designing algorithms that maximize the rate $\lambda$ that can be supported. We call this rate the
uniform per node throughput.

\emph{Channel Assignment Model}: 
We focus on static channel assignment schemes where a particular assignment, once assigned, is used for a reasonably long period of time.
Dynamic channel assignment can be shown to outperform static schemes. 
 Indeed, it can be shown to
 achieve the best possible throughput under our model. However, factors such as switching delay, coordination 
 overheads, and hardware constraints make dynamic channel assignment challenging to implement in practice. 


\emph{Objective}: Given this network model, our objective is to design a policy that maximizes the uniform per node throughput. 
Since the channel assignments under such a policy may effectively create a multi-hop topology, the
overall policy consists of the channel assignment, routing as well as transmission scheduling algorithms.
For any given tuple $(N, T, F)$, this can be formulated as an optimization problem that searches over
all possible channel assignment, routing, and scheduling options and yields the maximum throughput. 
However, given the enormous search space, this approach quickly becomes intractable and
does not yield useful insights. 

Instead,  we take a \emph{scaling analysis} approach where
we focus on characterizing the scaling behavior of the achievable throughput as a function of the network size $N$.
This approach is useful because, while being tractable, it provides insights into the optimal network design and helps derive scaling laws. 
In analyzing the scaling behavior of the throughput, we assume that the number of available channels $F$ can scale with $N$ according to a fixed function. 
For example, $F(N)$ could be $\sqrt{N}$.
This approach significantly simplifies the problem and allows us to develop algorithms
whose achievable throughput can be  computed exactly in closed form. 
We will present these algorithms in Sec. \ref{section:hint} and \ref{section:towards}. 
For each algorithm that we study, 
we will also calculate the exact total number of channels used. 
Finally, in Sec. \ref{section:tradeoff}, we will bound the ``performance gap''
 between these algorithms and the best possible throughput that can be achieved 
by solving the optimization problem for $(N, T, F(N))$.

{For simplicity, in the rest of the paper, we assume normalized and idealized transmission rates so that $1$ packet can be transmitted per
channel use. This assumption is relaxed in the simulations in Sec. \ref{section:eval}.

\section{HINT: Hierarchical Interleaved Channel Assignment}
\label{section:hint}

In this section, we present a class of channel assignment strategies that we call {HINT-T} 
which can achieve $\Theta\big(\frac{1}{N^{1/T}}\big)$ 
per node throughput using 
$F = \Theta(N^{\frac{T-1}{T}})$ 
channels and $T$ transceivers per node. 
The main idea behind this strategy is to form $N^{\frac{T-1}{T}}$ groups, each of size $N^{1/T}$, for every transceiver index. 
Then it  assigns transceivers to these groups in such a way that every node can reach every other node in \emph{at most} $T$ hops. 
 For simplicity of presentation, we assume throughout that $N^{1/T}$ is an integer. 
This scheme can be modified to be applicable when this is not the case using similar ideas. 
For brevity, we do not discuss this extension in the paper. However, in the simulations 
in Sec. \ref{section:eval}, we also implement this scheme for the case when $N^{1/T}$ is not an integer.


\subsection{HINT-2}
\label{section:hint2}

To provide intuition, we first describe the assignment for $T=2$. 
The general case is treated in Sec. \ref{section:hintT}.

\subsubsection{Channel Assignment under HINT-2}
\label{section:hint2_CA}

Let $M = N^{1/2}$. Group the first transceiver of all nodes into $M$ groups, each containing $M$ nodes. 
We call these the Tx-1 groups and assign consecutively numbered nodes to them as follows. 
The first Tx-1 group contains nodes $1, 2, \ldots, M$, the second Tx-1 group contains nodes $M+1, M+2, \ldots, 2M$, and so on. 

Similarly, group the second transceiver of all nodes into $M$ groups, each containing $M$ nodes.
We call these the Tx-2 groups and assign nodes to them as follows. 
The first Tx-2 group contains the first node from each Tx-1 group, i.e., nodes $1, M+1, 2M+1, \ldots, (M-1)M + 1$.
Likewise, the second Tx-2 group contains the second node from each Tx-1 group, and so on.

Each of the groups thus formed  is assigned an orthogonal channel. 
Since there are a total of $2M$ groups, the total number of channels used is $2M$.
Note that the first transceiver of every node is assigned to a Tx-1 group while 
the second transceiver of every node is assigned to a Tx-2 group.
Fig. \ref{fig:hint2_example} illustrates this assignment for a network of $N=16$ nodes using
channels $f_1$ to $f_8$. Note that nodes in Tx-2 groups are obtained by \emph{interleaving} nodes from the
Tx-1 groups. As shown later in Sec. \ref{section:hintT}, the same idea can be applied in a \emph{hierarchical}  fashion
to obtain assignments for the general $T$ case. Hence, we call this family of schemes 
HINT-T: Hierarchical Interleaved Channel Assignment with $T$ transceivers.

\begin{figure}
\centering
\includegraphics[height= 5cm, width = 6.5cm, angle=0]{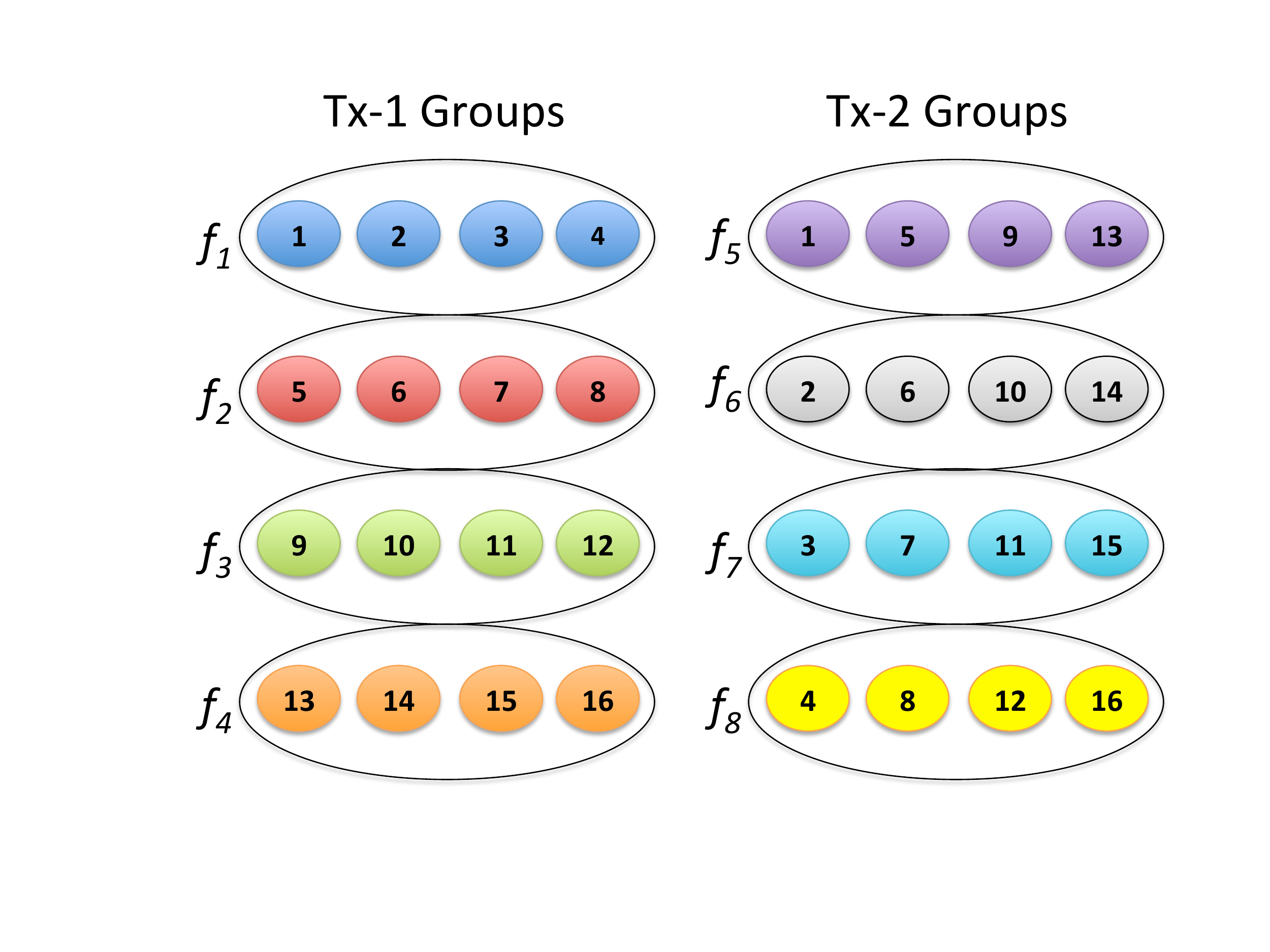}
\caption{Channel Assignments under HINT-2 for  $N =16$.}
\label{fig:hint2_example}
\end{figure}

\subsubsection{Scheduling and Routing under HINT-2}
\label{section:hint2_routing}

We next describe the scheduling and routing strategy that is used with this assignment.

\emph{Scheduling Strategy}: Every transceiver in a group gets the same fraction of time to transmit on that group's channel. 
Since each group has $M$ transceivers, every 
transceiver gets ${1}/{M}$ of the total transmission capacity of the channel.

\emph{Routing Strategy}: For any source node $s$, if a destination node $d$ is in the same Tx-1 group, 
it transmits directly to $d$  in one hop on its Tx-1 channel. 
Else, $s$ transmits to that node $r$ in its Tx-2 group which shares its Tx-1 group with $d$. Node $r$
then forwards to $d$ using its Tx-1 channel.


For example, in Fig. \ref{fig:hint2_example}, the route from node $1$ to node $3$ follows the path $1-3$ on channel $f_1$. 
This involves transmissions only by the first transceiver of node $1$. On the other hand, the route from
node $1$ to node $11$ follows the path $1-9$ on channel $f_5$ and $9-11$ on channel $f_3$. 
This involves transmissions by the second transceiver of node $1$ and the first transceiver of node $9$.
It can be seen that this routing strategy ensures that every node is within $2$ hops of any other node.

\subsubsection{Throughput Analysis of HINT-2}
\label{section:hint2_analysis}

We now show that the HINT-2 assignment along with the scheduling and routing strategy as described above
can achieve a throughput of ${1}/{M}$ for every node. 
\begin{thm} 
For $N^{1/2} = M$ (where $M$ is an integer),
HINT-2 can achieve a uniform per node throughput of $1/N^{1/2}$ using $2N^{1/2}$ channels.
\label{thm:hint2_thruput}
\end{thm}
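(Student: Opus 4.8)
The plan is to recast the throughput claim as a \emph{feasibility} question about transmission load. Under the prescribed scheduling, each of the $M$ transceivers sharing a group channel is guaranteed a $1/M$ fraction of that channel, i.e., it may push at most $1/M$ packet per unit time. Since the routing is fixed, the rate $\lambda = 1/M$ is supportable precisely when the expected transmission load imposed on \emph{every} transceiver is at most its $1/M$ allocation. The random-unicast traffic together with the regular, symmetric HINT-2 construction makes all Tx-1 transceivers statistically identical and likewise all Tx-2 transceivers, so it suffices to bound the load on the two transceivers of a single generic node $s$.

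The Tx-2 transceiver of $s$ carries only the first hop of $s$'s own flow, and only when $s$'s destination lies outside its Tx-1 group; this happens with probability $(N-M)/(N-1)$, giving a Tx-2 load of $\lambda (N-M)/(N-1)$. The Tx-1 transceiver of $s$ carries two kinds of traffic: (i) the direct one-hop transmissions of $s$'s own flow, which occur with probability $(M-1)/(N-1)$, and (ii) relay (second-hop) transmissions for flows that pass through $s$. The crux is counting (ii). By the interleaving rule each Tx-2 group contains exactly one node from each Tx-1 group, so $s$ serves as relay for a flow exactly when its source lies in $s$'s Tx-2 group ($M-1$ choices) and its destination lies in $s$'s Tx-1 group ($M-1$ relevant choices, after discarding the degenerate cases where the ``relay'' coincides with the source or destination). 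Each such flow contributes rate $\lambda$, weighted by the destination probability $(M-1)/(N-1)$.

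Assembling these terms, the Tx-1 load is $\lambda\big[(M-1) + (M-1)^2\big]/(N-1) = \lambda (M-1)M/(N-1)$, which, using $M^2 = N$, collapses to exactly $\lambda (N-M)/(N-1)$ --- identical to the Tx-2 load, a cancellation that reflects the balance built into the design. Setting $\lambda = 1/M$ makes each load equal to $\tfrac{1}{M}\cdot\tfrac{N-M}{N-1} < \tfrac{1}{M}$, so every transceiver's demand fits (with room to spare) inside its $1/M$ allocation and the schedule is feasible. The channel count is immediate from the construction: $M$ Tx-1 groups plus $M$ Tx-2 groups give $2M = 2N^{1/2}$ orthogonal channels. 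I expect the main obstacle to be step (ii): correctly identifying the set of flows relayed by a given node from the interleaving structure and handling the degenerate source/destination coincidences, since a sloppy count there is exactly what would spoil the clean $\lambda = 1/M$ guarantee.
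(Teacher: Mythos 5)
Your proposal is correct and follows essentially the same route as the paper's proof: exploit symmetry to reduce to a single node, compute the transmission load on its Tx-2 transceiver ($\lambda M(M-1)/(N-1)$, which equals your $\lambda(N-M)/(N-1)$) and on its Tx-1 transceiver (own-group traffic plus $(M-1)^2$ relayed flows, giving the same total), and check that $\lambda = 1/M$ fits within each transceiver's $1/M$ share of its channel. The only cosmetic difference is that the paper solves the load inequality to get the slack bound $\lambda \leq 1/M + 1/M^2$ rather than plugging in $\lambda = 1/M$ directly.
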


\begin{proof}
The proof uses the following observation.
Because of the symmetry of the assignment, it is sufficient
to focus on the total load on each of the two transceivers of
node $1$ and show that it can be supported. The details are provided in Appendix A.
\end{proof}

\subsection{HINT-T}
\label{section:hintT}

We next present the assignment strategy for $T > 2$.

\subsubsection{Channel Assignment under HINT-T}
\label{section:hintT_CA}

Let $M = N^{1/T}$. Similar to the $T=2$ case, the HINT-T strategy forms $M^{T-1}$ groups for every transceiver index. 
The groups corresponding to transceiver index $k$ are called Tx-k groups where $1 \leq k \leq T$. 
Each Tx-k group contains $M$ nodes and is assigned one orthogonal channel.
Since there are a total of $TM^{T-1}$ such groups, the total number of channels used under HINT-T is $TM^{T-1}$. 
The node assignment to these groups is performed as follows.

Fix a transceiver index $k$ where $1 \leq k \leq T$. Let $i$ and $j$ be integers such that $1 \leq i \leq M^{T-k}$ and $1 \leq j \leq M^{k-1}$.
Then the Tx-k group number $(i-1)M^{k-1} + j$ contains the following nodes:
\begin{align}
\{ &(i-1)M^k + j, (i-1)M^k  + j + M^{k-1}, \nonumber\\
& (i-1)M^k  + j + 2M^{k-1}, (i-1)M^k + j  + 3M^{k-1},  \nonumber\\
& \ldots, (i-1)M^k  + j + (M-1)M^{k-1} \}.
\label{eq:hintT_1}
\end{align}
This definition ensures that every Tx-k group has $M$ nodes. Further, 
the total number of Tx-k groups is $\sum_{i=1}^{M^{T-k}} \sum_{j=1}^{M^{k-1}} 1 = M^{T-1}$.

Fig. \ref{fig:hint3_example2_table} shows the 
HINT-3 assignment for a network  of $N=27$ nodes with $T=3$ transceivers per node. 
It also illustrates the routing strategy that will be described in Sec. \ref{section:hintT_routing}.
Note that this assignment ensures that every node is within $3$ hops of any other node. 
In general, the HINT-T assignment ensures that every node is within $T$ hops of any other node.

\subsubsection{Level Sets}
\label{section:hintT_level_sets}

In order to describe the routing strategy, 
we define the following collection of sets for each transceiver index $k$ where $1 \leq k \leq T$. 
For $1 \leq i \leq M^{T-k}$, define $\mathcal{S}_{ik}$ as the set containing the nodes $(i-1)M^k + 1$ to $iM^k$. 
For each $k$, there are $M^{T-k}$ such sets, each containing $M^k$ consecutively numbered nodes. 
We call them the ``level $k$ sets''.
Fig. \ref{fig:hint3_example2_table} illustrates these sets for $N=27$.
Note that for  $k=3$, there is only one such set $\mathcal{S}_{13}$
and it consists of all nodes.

The following properties follow directly from the definition of the level sets and the HINT-T assignment in (\ref{eq:hintT_1}).
\begin{enumerate}

\item Any two level sets $\mathcal{S}_{ik}$ and $\mathcal{S}_{jk}$ where $i \neq j$ are disjoint. 

\item Any level set $\mathcal{S}_{ik}$ where $k > 1$ is a union of $M$ level sets from level $k-1$.

\item Under the HINT-T assignment  (\ref{eq:hintT_1}), for $1 \leq k \leq T-1$, no  two nodes from any $\mathcal{S}_{ik}$ are in the same Tx-k+1 group. 
\end{enumerate}

To illustrate the third property in Fig. \ref{fig:hint3_example2_table}, note that nodes $1, 2, 3$ from the first Tx-1 group are in different
Tx-2 groups. Similarly, nodes $3, 6, 9$ from the third  Tx-2 group are in different Tx-3 groups. 
This is precisely the interleaving of nodes from lower Tx groups to form upper Tx groups as seen
in HINT-2. HINT-T generalizes this interleaving to $T$ hierarchical levels.

\begin{figure}
\centering
\includegraphics[width=8.5cm, height=4.5cm, angle=0]{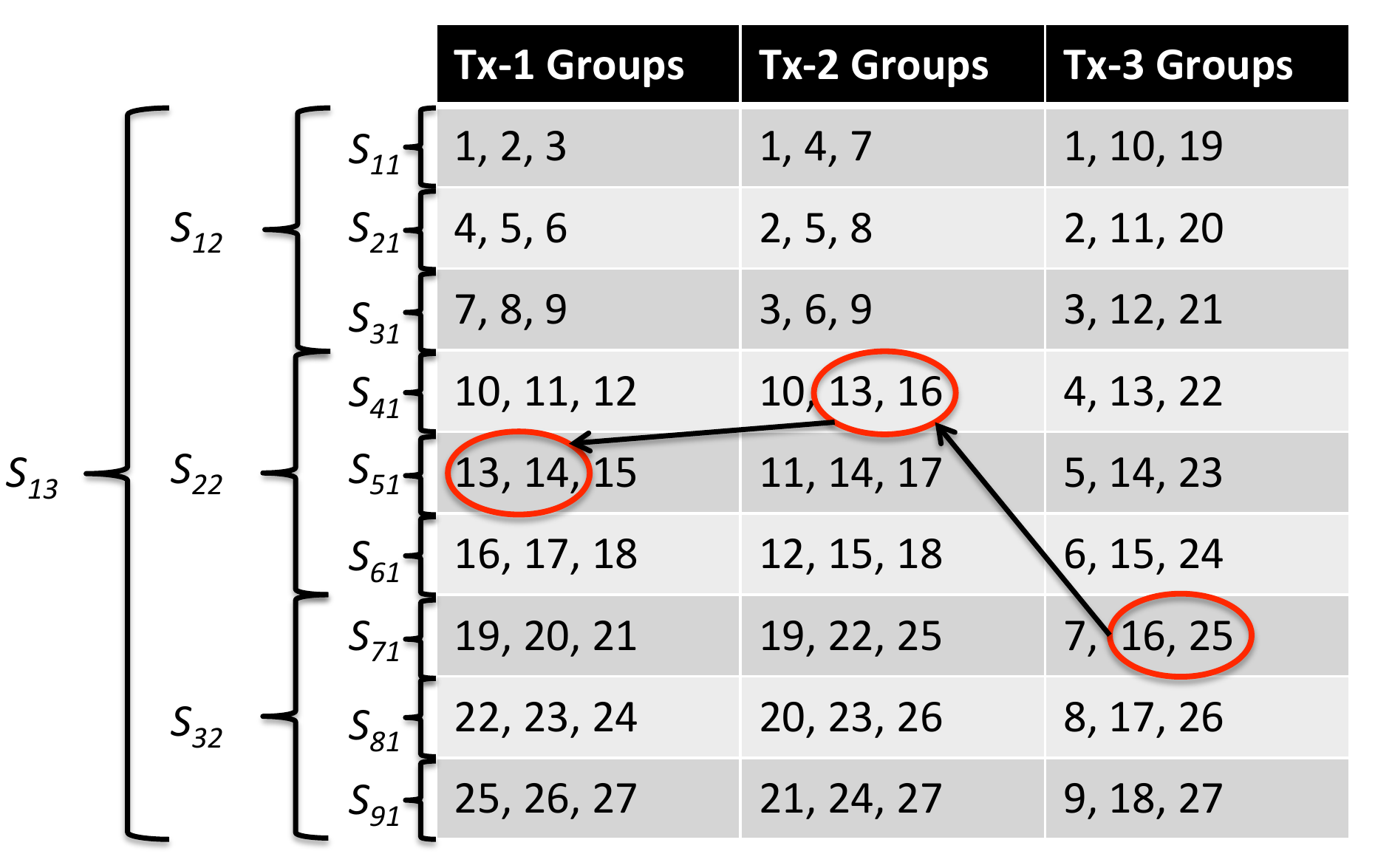}
\caption{Channel Assignment under HINT-3 for $N = 27$. Also shown are the level sets and the route from node $25$ to node $14$.}
\label{fig:hint3_example2_table}
\end{figure}

\subsubsection{Scheduling and Routing under HINT-T}
\label{section:hintT_routing}

We now describe the scheduling and routing strategy that is used with HINT-T.

\emph{Scheduling Strategy}: Similar to HINT-2 strategy, every transceiver in a group gets 
${1}/{M}$ of the total transmission capacity of the channel.

\emph{Routing Strategy}: 
For any two nodes $a$ and $b$, define $k(a, b)$ as the smallest $k$
for which there exists a level set $\mathcal{S}_{ik}$  such that \emph{both} $a$ and $b$ are in $\mathcal{S}_{ik}$. 
Note that at least one such set always exists since the set $\mathcal{S}_{1T}$ contains all nodes.
Further, $k(a, b) \leq T$ for all $a, b$.
The routing strategy from a source node $s$ to a destination node $d$ can be described using these $k(a, b)$ values.

First, calculate $k(s, d)$. If $k(s, d) = 1$, then $d$ is in the same Tx-1 group as $s$ and $s$ 
transmits directly to $d$ in one hop using its first transceiver. 
If $k(s, d) \neq 1$, then $d$ is in a different Tx-1 group than $s$ and the packet is routed as follows.

If $d$ is in the same Tx-k(s, d) group as $s$, then $s$ transmits
directly to $d$ in one hop using its $k(s, d)^{th}$ transceiver.

Else, $s$ determines the node with the smallest value of $k(r, d)$ among all of its neighbors $r$ in its Tx-k(s, d) group. 
Denote this node by $r^*$. 
Then $s$ relays the packet to node $r^*$ using its $k(s, d)^{th}$ transceiver. 
Node $r^*$ now uses the same algorithm as described before to route the packet to $d$.

This procedure is illustrated for the HINT-3 assignment  in Fig. \ref{fig:hint3_example2_table}
where node $25$ wants to send to node $14$. We first calculate $k(25, 14)$. Using Fig.  \ref{fig:hint3_example2_table} , 
it can be seen that $\mathcal{S}_{13}$ is the only set that has both nodes $25$ and $14$. Thus, we have that $k(25, 14) = 3$. 
Next, since $14$ is not a neighbor of  $25$ in its Tx-3 group, node $25$ calculates $k(7, 14)$ and $k(16, 14)$ for its Tx-3 group
neighbors $7$ and $16$. Using Fig.  \ref{fig:hint3_example2_table}, we have $k(7, 14) = 3$ and $k(16, 14) = 2$. Therefore,
node $25$ forwards the packet to node $16$ for relaying to the destination. This procedure is now repeated by node $16$.
The following Lemma characterizes the HINT-T routing strategy.

\begin{lem}
The HINT-T routing strategy ensures that there are at most $T$ hops between any pair of nodes.
\label{lem:hintT_routing}
\end{lem}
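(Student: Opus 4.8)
The plan is to prove a slightly sharper statement by strong induction on $k(s,d)$: that the HINT-T routing delivers a packet from $s$ to $d$ in at most $k(s,d)$ hops. Since $k(s,d) \le T$ for every pair (as already observed, because $\mathcal{S}_{1T}$ contains all nodes), the Lemma follows at once. The base case $k(s,d)=1$ is immediate: here $s$ and $d$ share a Tx-1 group and $s$ transmits directly in one hop.

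For the inductive step, suppose $k(s,d)=m>1$ and let $\mathcal{S}_{im}$ be the level-$m$ set containing both $s$ and $d$; it is unique by the disjointness of level sets. The crux is a structural claim about the Tx-$m$ group of $s$: (i) this group lies entirely inside $\mathcal{S}_{im}$, and (ii) it contains exactly one node from each of the $M$ level-$(m-1)$ subsets whose union is $\mathcal{S}_{im}$. Granting the claim, I would argue as follows. Let $\mathcal{S}_{i_d,m-1}$ be the level-$(m-1)$ subset of $\mathcal{S}_{im}$ containing $d$. By (ii), the Tx-$m$ group of $s$ has a representative $r$ lying in $\mathcal{S}_{i_d,m-1}$. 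If $r=d$, then $d$ is in $s$'s Tx-$m$ group and $s$ reaches it in one hop. Otherwise $r$ and $d$ both lie in $\mathcal{S}_{i_d,m-1}$, so $k(r,d)\le m-1$; since the routing rule forwards to the neighbor $r^*$ minimizing $k(\cdot,d)$, we obtain $k(r^*,d)\le k(r,d)\le m-1$. Applying the induction hypothesis to $r^*$, the packet reaches $d$ from $r^*$ in at most $m-1$ hops, giving a total of at most $m$ hops.

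Establishing the structural claim is the main obstacle, and it is where the exact form of the assignment (\ref{eq:hintT_1}) must be used. For part (i) I would verify that the node indices of any Tx-$m$ group, namely $(i-1)M^m + j + \ell M^{m-1}$ for $\ell=0,\ldots,M-1$, all lie in the interval from $(i-1)M^m+1$ to $iM^m$ that defines $\mathcal{S}_{im}$; together with disjointness of level sets, this forces the group containing $s$ to sit inside the same $\mathcal{S}_{im}$ that contains $s$. Part (ii) is precisely the interleaving property (the third listed property of the level sets): rewriting the $\ell$-th node of the group as $\big((i-1)M+\ell\big)M^{m-1}+j$ and using $1 \le j \le M^{m-1}$ shows that, as $\ell$ ranges over $0,\ldots,M-1$, the node falls into the distinct level-$(m-1)$ sets $\mathcal{S}_{(i-1)M+\ell+1,\,m-1}$, i.e., one representative per subset. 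The index bookkeeping with $i$, $j$, and $\ell$ is routine but must be carried out carefully to confirm that every level-$(m-1)$ subset of $\mathcal{S}_{im}$ is hit exactly once, since this is exactly what guarantees the existence of the progress-making relay $r$ and hence the strict decrease $k(r^*,d)<k(s,d)$ that drives the induction.
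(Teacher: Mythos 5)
Your proof is correct and follows essentially the same route as the paper's: the heart of both arguments is that the Tx-$k(s,d)$ group of $s$ contains exactly one representative of each level-$(k(s,d)-1)$ subset of the level set containing $s$ and $d$, so the chosen relay $r^*$ satisfies $k(r^*,d) \leq k(s,d)-1$, and the bound follows since $k(s,d) \leq T$. The differences are only in packaging: you phrase the decrease as a strong induction and verify the structural claim by explicit index computation, whereas the paper asserts it follows from the HINT-T construction and makes explicit the one-line point you leave implicit, namely that minimality of $k(s,d)$ forces the representative $r$ of $d$'s level-$(k(s,d)-1)$ subset to differ from $s$, so that $r$ is indeed among the neighbors over which $r^*$ is chosen.
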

\begin{proof} 
 The full proof is  provided in Appendix B.
\end{proof}

\subsubsection{Throughput Analysis of HINT-T}
\label{section:hintT_analysis}

We now show that the HINT-T assignment along with the scheduling and routing strategy as described above
can acheive a throughput of ${1}/{M}$ for every node. 
\begin{thm} 
For $N^{1/T} = M$ (where $M$ is an integer), 
HINT-T achieve a uniform per node throughput of $1/N^{1/T}$ using $TN^{\frac{T-1}{T}}$ channels.
\label{thm:hintT_thruput}
\end{thm}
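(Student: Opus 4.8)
The plan is to mirror the structure of the proof of Theorem~\ref{thm:hint2_thruput}: by the symmetry of the hierarchical assignment it suffices to bound, for a single representative node (say node $1$), the traffic load placed on each of its $T$ transceivers, and to verify that this load does not exceed the per-transceiver capacity. Under the scheduling strategy each transceiver receives a $1/M$ share of its channel, so with normalized rates the capacity of every transceiver is exactly $1/M = 1/N^{1/T}$. Hence it is enough to show that, when every node injects $\lambda = 1/M$ packets/sec split uniformly over the $N-1$ possible destinations, the load on node $1$'s Tx-$k$ transceiver is at most $1/M$ for each $k$. The channel count is immediate from the construction: there are $T$ transceiver indices and $M^{T-1}$ groups per index, each assigned a distinct orthogonal channel, giving $TM^{T-1} = TN^{\frac{T-1}{T}}$ channels.

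First I would pin down which packets node $1$ forwards on its $k$-th transceiver. Since every node $c$ on a route toward $d$ transmits on transceiver $k(c,d)$, node $1$ uses Tx-$k$ exactly for those destinations $d$ with $k(1,d)=k$, and for such a $d$ it carries every flow whose route to $d$ passes through it. Using the level-set sizes from Sec.~\ref{section:hintT_level_sets}, the number of destinations with $k(1,d)=k$ is the size of the level-$k$ set containing node $1$ minus that of the level-$(k-1)$ set containing it, namely $M^k - M^{k-1} = M^{k-1}(M-1)$. It therefore remains to count, for a fixed such $d$, how many sources route through node $1$.

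The core of the argument is this source count, which I would obtain by analyzing the in-tree of routes toward a fixed $d$. Using Property~3 and the interleaving structure in (\ref{eq:hintT_1}), I would show that the immediate predecessors of node $1$ (the nodes whose next hop toward $d$ is node $1$) are exactly, for each level $m$ with $k < m \le T$, the $M-1$ other members of node $1$'s Tx-$m$ group; each such predecessor $c$ satisfies $k(c,d)=m$ and, by the self-similarity of the hierarchical construction restricted to the common level-$m$ set of $c$ and $d$, roots a subtree identical in structure to node $1$'s situation one level higher. This yields the recursion $\sigma_m = 1 + (M-1)\sum_{m'=m+1}^{T}\sigma_{m'}$ with $\sigma_T = 1$, whose solution is $\sigma_m = M^{T-m}$; thus the number of sources routing through node $1$ toward any $d$ with $k(1,d)=k$ equals $M^{T-k}$.

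Combining the two counts, the load on node $1$'s Tx-$k$ transceiver is
\begin{equation}
\frac{\lambda}{N-1}\, \big(M^{k-1}(M-1)\big)\big(M^{T-k}\big) = \frac{M^{T-1}(M-1)}{M^T-1}\,\lambda,
\end{equation}
which is notably independent of $k$: every transceiver is equally loaded. Since $M^{T-1}(M-1)/(M^T-1) = M^{T-1}/(M^{T-1}+\cdots+1) < 1$, this load is strictly below $\lambda$, so setting $\lambda = 1/M$ gives a per-transceiver load strictly below $1/M$, within capacity. Hence $\lambda = 1/N^{1/T}$ is supportable, completing the proof. I expect the main obstacle to be the source count in the third paragraph --- specifically, rigorously justifying the self-similar recursion for the route in-tree, i.e., that the subtree rooted at a level-$m$ predecessor behaves exactly like the level-$m$ instance of the original problem; the destination count and the final arithmetic are routine once this is in hand.
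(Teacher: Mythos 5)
Your proposal is correct and follows essentially the same route as the paper's Appendix C: reduce to node $1$ by symmetry, count the $(M-1)M^{k-1}$ destinations served on the Tx-$k$ transceiver and the $M^{T-k}$ sources routing through node $1$, and verify that the resulting load $\frac{\lambda(M-1)M^{T-1}}{M^T-1}$ is within the $1/M$ capacity at $\lambda = 1/M$. The only difference is cosmetic: where the paper counts new sources level by level (``each node counted so far brings $M-1$ new ones''), you obtain the same count $M^{T-k}$ via a subtree-size recursion $\sigma_m = 1 + (M-1)\sum_{m'>m}\sigma_{m'}$, which is arguably a more rigorous justification of the same in-tree structure.
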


\begin{proof}
The proof is based on similar ideas as Theorem \ref{thm:hint2_thruput} and is  provided in Appendix C.
\end{proof}

\subsection{Discussion of HINT-T}
\label{section:hintT_discussion}

Theorem \ref{thm:hintT_thruput} implies that if $T= \log_2N$ and if there are $N\log_2N/2$ channels available, then 
HINT-T can achieve a per node throughput given by $1/N^{1/\log_2N} = 1/2$.
Thus, it is possible to get $\Theta(1)$ per node throughput if each node has $\log_2N$ transceivers and
there are $N\log_2N/2$ channels available. This is in sharp contrast to the case of a single radio network, i.e., $T=1$. 
Note that the best possible uniform throughput for $T=1$ under static channel assignment is $1/N$. 
This is because when $T=1$, all transceivers must share the same channel so that the network  remains 
connected. However, when $T > 1$, much higher throughput can be achieved. 
This shows that  there is a fundamental difference between single radio and multi radio networks when static channel
assignment is used.

However, requiring $\Theta(\log_2N)$ transceivers per node is impractical. Also impractical is the
availability of $\Theta(N\log_2N/2)$ channels. This raises the following questions:

\begin{enumerate}

\item Do we really need the number of transceivers per node and total channels required to 
grow to infinity to get $\Theta(1)$ per node throughput? 

\item Is there a fundamental way to characterize the performance of any channel assignment strategy? Specifically, what is the
relationship between number of transceivers, number of channels used, network size and per node throughput?

\end{enumerate}
We address these questions in the following sections.

\section{Towards $\lambda = \Theta(1)$ with $T=2$}
\label{section:towards}

\subsection{LOG-2 Assignment}
\label{section:hint_log}

In this section, we describe a strategy called LOG-2 that achieves $\Theta\big(\frac{1}{(\log_2 N)^2}\big)$ throughput
per node with only two transceivers per node using $O\big(\frac{N}{\log_2N}\big)$ channels.
The main idea behind this strategy is the following. 
We first form two sets of groups, one  per transceiver index.
Each set contains $O\big(\frac{N}{\log_2N}\big)$ groups, each group having size $O(\log_2 N)$ nodes.
Then the strategy  assigns nodes to these groups in such a way that a node can reach any other node in \emph{at most} $O(\log_2 N)$ hops.
For simplicity of presentation,
 we assume in the following that $N$ is of the form $N = M \log_2 M$ where $\log_2M \in \mathbb{Z}^+$.
This scheme can be modified to be applicable when this is not the case using similar ideas. 

\subsubsection{Channel Assignment under LOG-2}
\label{section:hint_log_CA}

The channel assignment is performed as follows. First, group the first transceiver of all nodes 
into $M$ Tx-1 groups, each containing $\log_2M$  consecutively numbered nodes. 
Thus, the $k^{th}$ Tx-1 group contains nodes $(k-1)\log_2M + 1, (k-1)\log_2M + 2, \ldots, k\log_2M$.

Next, group the second transceiver of all nodes into $M$ Tx-2 groups, each containing $\log_2M$ nodes.
We assign nodes to them as follows.
For $1 \leq i \leq \log_2M$, the $i^{th}$ node from Tx-1 group number $((j-1 + 2^{i-1})\bmod M)$ is
assigned to be the $i^{th}$ node of  Tx-2 group number $j$ (where $1 \leq j \leq M$). 
The ``mod \emph{M}'' operation used here and in rest of the paper is defined as follows.
For any non-negative integers $a$ and $b$, we use the following definition:
\begin{displaymath}
(a \bmod M) \defequiv \left\{ \begin{array}{ll}
M & \textrm{if $a = bM$ }\\
x & \textrm{if $a = bM + x$ where $0 < x < M$}
\end{array} \right.
\label{eq:mod_defn}
\end{displaymath}

Each of the groups thus formed  is assigned an orthogonal channel. Since there are a total of $2M$ groups, the total number of channels used is $2M$.
Fig. \ref{fig:hintlog_example_table} shows this assignment for a network of $N=24 = 8 \log_2 8$ nodes with $T=2$. 
To illustrate the working of the algorithm, consider Tx-2 group number $7$. 
For $1 \leq i \leq 3$, the $i^{th}$ node from Tx-1 group number $((7-1 + 2^{i-1})\bmod 8)$ is
assigned to be the $i^{th}$ node of this group. 
For $i=1$, this is given by the first node of Tx-1 group $(6+1 \bmod 8) = 7$, i.e., node $19$.
For $i=2$, this is given by the second node of Tx-1 group $(6+2 \bmod 8) = 8$, i.e., node $23$.
And for $i=3$, this is given by the third node of Tx-1 group $(6+4 \bmod 8) = 2$, i.e., node $6$.

\begin{figure}
\centering
\includegraphics[height=4cm, width=5cm,angle=0]{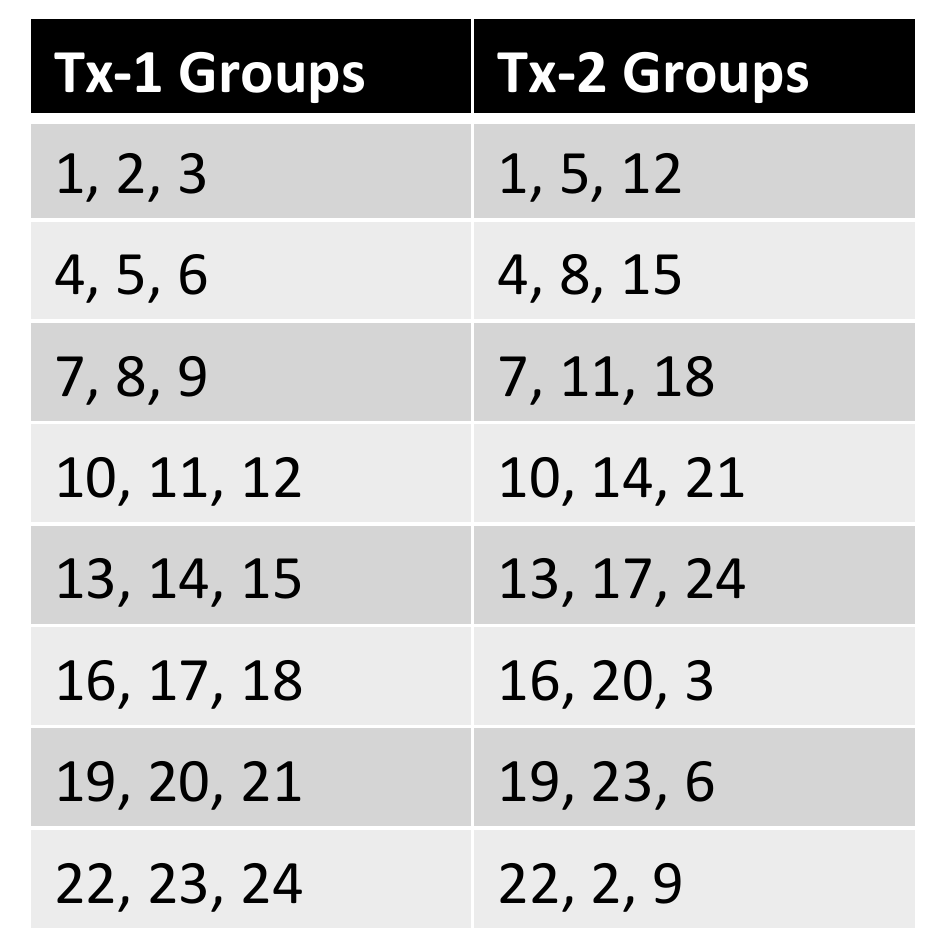}
\caption{Assignment under LOG-2 for a network of size $N = 8 \log_2 8 = 24$.}
\label{fig:hintlog_example_table}
\end{figure}

The following property follows from the definition of the LOG-2 assignment strategy:
\begin{prop}
 In any Tx-2 group number $j$, there is one node from each of the Tx-1 groups numbered 
$((j - 1 + 2^{i-1} )\bmod M)$ where $1 \leq i \leq \log_2M$.
\end{prop}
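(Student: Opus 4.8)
The plan is to recognize that the Property has two components: that Tx-2 group $j$ draws one node from each listed Tx-1 group, and that the listed group indices $((j-1+2^{i-1})\bmod M)$, for $1 \le i \le \log_2 M$, are genuinely distinct so that ``each'' is meaningful. The first component is immediate from the LOG-2 assignment rule itself: by construction the $i$-th member of Tx-2 group $j$ is, for each $i$, taken as the $i$-th node of Tx-1 group number $((j-1+2^{i-1})\bmod M)$. Hence all the real content lies in the second component, and I would reduce the whole Property to showing that the map $i \mapsto ((j-1+2^{i-1})\bmod M)$ is injective on $\{1, 2, \ldots, \log_2 M\}$.

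To prove injectivity I would write $L = \log_2 M$, so that $M = 2^L$ by the standing assumption $\log_2 M \in \mathbb{Z}^+$. Suppose two offsets collide, i.e. $((j-1+2^{i_1-1})\bmod M) = ((j-1+2^{i_2-1})\bmod M)$ for some $i_1 \neq i_2$ in $\{1, \ldots, L\}$. By the definition of the mod operation this forces $M \mid (2^{i_1-1} - 2^{i_2-1})$. But the offsets $2^{i-1}$ range over $\{2^0, 2^1, \ldots, 2^{L-1}\} = \{1, 2, 4, \ldots, M/2\}$, so any nonzero difference $2^{i_1-1} - 2^{i_2-1}$ has absolute value at most $2^{L-1} - 1 = M/2 - 1 < M$. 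A nonzero integer of magnitude strictly less than $M$ cannot be divisible by $M$, a contradiction. Hence the $L$ group indices are pairwise distinct.

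Combining the two parts gives the Property: Tx-2 group $j$ consists of exactly $L = \log_2 M$ nodes, the $i$-th of which is the $i$-th node of Tx-1 group $((j-1+2^{i-1})\bmod M)$, and these groups are all different, so there is precisely one node from each. The only place demanding care is the nonstandard ``$\bmod M$'' convention used in the paper, which returns a value in $\{1, \ldots, M\}$ rather than $\{0, \ldots, M-1\}$; I would note that this relabeling is a bijection on residue classes and therefore does not affect the congruence argument above, so the divisibility step remains valid. I expect this bookkeeping, rather than any deep argument, to be the main obstacle, since the crux is simply that distinct powers of two below $M = 2^L$ occupy distinct residues modulo $M$.
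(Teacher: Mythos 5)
Your proof is correct, and it actually does more than the paper, whose entire justification is the sentence that the property ``follows from the definition of the LOG-2 assignment strategy.'' The first half of your proposal --- that by construction the $i$-th member of Tx-2 group $j$ is the $i$-th node of Tx-1 group $((j-1+2^{i-1})\bmod M)$ --- is the whole of the paper's argument. The second half, the injectivity of the map $i \mapsto ((j-1+2^{i-1})\bmod M)$ on $\{1,\ldots,\log_2 M\}$, is content the paper leaves implicit, and you are right that it is where the real mathematical substance lies: without distinctness of these indices, the phrase ``one node from \emph{each}'' would fail, as would the later claims that the cover sets $\mathcal{U}_{ij}$ are disjoint and that consecutive members of a Tx-2 group sit in Tx-1 groups spaced according to the geometric sequence $2^0, 2^1, 2^2, \ldots$. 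Your divisibility argument is sound: with $M = 2^L$ and offsets $2^0,\ldots,2^{L-1}$ all below $M$, any two distinct offsets differ by a nonzero integer of magnitude at most $M/2 - 1 < M$, hence occupy distinct residues modulo $M$; and you correctly observe that the paper's nonstandard convention, which returns values in $\{1,\ldots,M\}$ by mapping residue $0$ to $M$, is a bijection on residue classes and therefore cannot merge two distinct residues. In short, you start from the same observation as the paper and then close a small but genuine gap that the paper glosses over.
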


This means that the difference between the Tx-1 group numbers of consecutive nodes in a Tx-2 group follows the \emph{geometric sequence}
 $2^0, 2^1, 2^2, \ldots, 2^{i-1}$ where $1 \leq i < \log_2M$.  
This is precisely the intuition behind this strategy, as illustrated in Fig. \ref{fig:hintlog_ring}. 
This figure shows the nodes in the first Tx-2 group (black circles) under LOG-2 assignment for $N = 64$ placed on a ring.
It can be seen that the neighboring nodes of $1$ are located progressively farther away as we 
traverse clockwise on the ring. Note that the assignment on Tx-1 ensures that node $1$ can reach close by nodes $2, 3, 4$ in one hop.
As we show in the next section, using a combination of local neighbors on Tx-1 and progressively farther away neighbors on Tx-2, this
strategy ensures that a node is within $2\log_2M + 1$ hops of any other node.
We note that the idea behind this strategy bears a resemblance to prior works on distributed hash tables for fast lookup such as Chord \cite{chord}.

\begin{figure}
\centering
\includegraphics[height=4cm, width=5cm, angle=0]{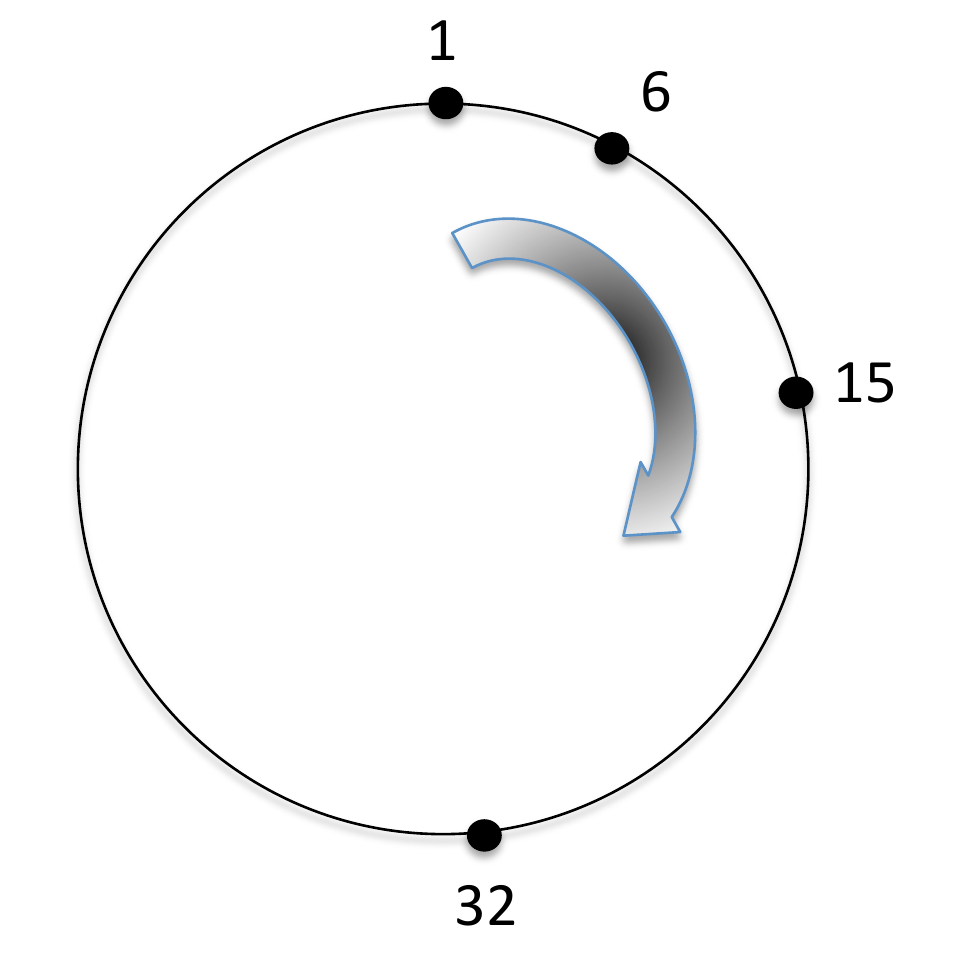}
\caption{Idea behind the formation of Tx-2 groups under LOG-2.}
\label{fig:hintlog_ring}
\end{figure}

\subsubsection{Cover Sets}
\label{section:hint_log_cover_sets}


In order to describe the routing strategy,
we define the following collection of sets for each Tx-2 group $j$ (where $1 \leq j \leq M$).
For every $i^{th}$ node in this group, we define a set $\mathcal{U}_{ij}$ as follows. 
For $1 \leq i \leq \log_2M - 1$, $\mathcal{U}_{ij}$ contains $2^{i-1}$ Tx-1 group numbers, starting from 
$((j-1 + 2^{i-1})\bmod M)$ and incrementing by $1$ and using the ``mod $M$'' operation as defined before. 
For $i=\log_2M$, $\mathcal{U}_{ij}$ contains $(2^{i-1} + 1)$ Tx-1 group numbers, starting from $((j-1 + 2^{i-1})\bmod M)$ 
and incrementing by $1$ while using the mod $M$ operation.
For example, for the network in Fig. \ref{fig:hintlog_example_table}, we have:
\begin{align*}
&\mathcal{U}_{11} = \{1\},  \mathcal{U}_{21} = \{2, 3\}, \mathcal{U}_{31} = \{4, 5, 6, 7, 8\} \\
&\mathcal{U}_{17} = \{7\},  \mathcal{U}_{27} = \{8, 1\}, \mathcal{U}_{37} = \{2, 3, 4, 5, 6\}
\end{align*}
There are a total of $M\log_2M$ such sets.  The routing strategy we discuss next uses these sets as follows. 
 A node at the $i^{th}$ level of Tx-2 group $j$ 
is responsible for relaying to nodes in those Tx-1 groups whose index is in the set $\mathcal{U}_{ij}$. 
Thus, this set lists all those Tx-1 groups that are ``covered'' by this node. Hence, it is called a Cover Set.
For example, in Fig. \ref{fig:hintlog_example_table}, node $5$ which is at the $2^{nd}$ level in  Tx-2 group $1$ 
forwards data for destinations in the set of Tx-1 groups in  $\mathcal{U}_{21}$. Likewise, node $6$ 
which is at the $3^{rd}$ level in Tx-2 group $7$ forwards data for destinations in the Tx-1 groups of  
$\mathcal{U}_{37}$.
Note that for any $j$, the sets  $\mathcal{U}_{ij}$ are disjoint and their union covers all Tx-1 groups.

\subsubsection{Scheduling and Routing under LOG-2}
\label{section:hint_log_routing}
We now describe the scheduling and routing strategy of LOG-2.

\emph{Scheduling Strategy}: Every transceiver in a Tx-1 group gets the same fraction of time to transmit on that group's channel. 
However, only the first node gets all the transmission time in a Tx-2 group. As we will see next, the routing strategy of LOG-2 requires only
the first node of each Tx-2 group to transmit on that group's channel.

\emph{Routing Strategy:} Consider a node $n$ that has a packet destined for node $m$. 
This packet could have been generated by node $n$ itself,
or it could have been forwarded to $n$ to be relayed to $m$. In both cases, node $n$ does the following.
Let the Tx-1 group number of nodes $n$ and $m$ be $g(n)$ and $g(m)$ respectively. 
In order to route a packet from $n$ to $m$, $n$ first checks if $m$ is in its Tx-1 group, i.e., if $g(n) = g(m)$. 
If yes, then $n$ transmits directly to $m$ in one hop using its first transceiver.
Else, $n$ transmits the packet to the first node in Tx-1 group number $g(n)$ for relaying to $n$. Note that 
this step is not required if $n$ itself is the first node in its Tx-1 group.

Let $q$ be the first node in Tx-1 group $g(n)$. Note that under LOG-2,
$q$ will also be the first node in Tx-2 group $g(n)$. 
Node $q$ checks if $m$ is in its Tx-2 group.
If yes, it transmits directly to $m$ in one hop using its second transceiver.
Else, node $q$  transmits the packet to that node in its Tx-2 group that ``covers'' node $m$. 
More precisely, $q$ transmits to the $i^{th}$ node in its Tx-2 group for forwarding to $m$ where
$2 \leq i \leq \log_2M$ and $g(m) \in \mathcal{U}_{ig(n)}$. 
This process is repeated until the packet gets delivered.


\begin{figure}
\centering
\includegraphics[height=4cm,angle=0]{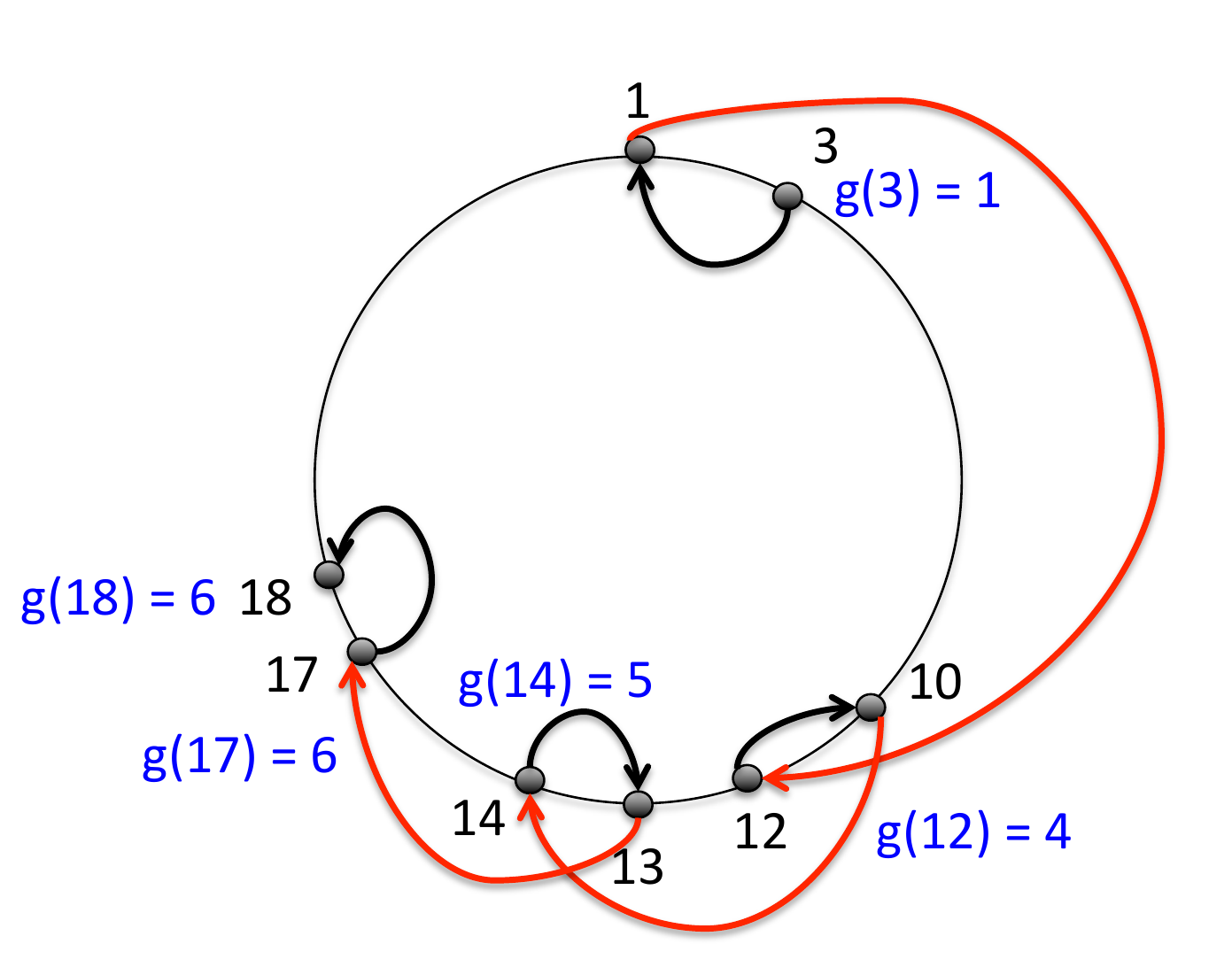}
\caption{Illustration of the routing strategy under LOG-2. Black arrows show transmissions in Tx-1 groups. Red arrows 
show transmissions in Tx-2 groups.}
\label{fig:hintlog_routing}
\end{figure}

In Fig. \ref{fig:hintlog_routing}, this procedure is illustrated for the $24$ node network of 
 Fig. \ref{fig:hintlog_example_table}. Suppose node $3$ wants to send to node $18$. 
Using Fig.  \ref{fig:hintlog_example_table}, we have that $g(3) = 1$ and $g(18) = 6$.
Thus, node $3$ transmits the packet to node $1$ in its Tx-1 group for relaying to $18$. 
Node $1$ checks if $18$ is in its Tx-2 group. Since it is not, node $1$ determines the node
in its Tx-2 group that covers node $18$. This node is given by the third node, i.e., node $12$, 
since $\mathcal{U}_{31} = \{4, 5, 6, 7, 8\}$. 
Therefore, node $1$ transmits the packet to node $12$ in its Tx-2 group. 
Node $12$ now repeats this procedure. Fig. \ref{fig:hintlog_routing} shows the final outcome of this process.
The entire route is given by $3$-$1$-$12$-$10$-$14$-$13$-$17$-$18$. Note that the hops alternate between Tx-1 and Tx-2 group
transmissions. Further, only the first node of a Tx-2 group transmits in any Tx-2 group transmission.

The following Lemma characterizes routing under LOG-2.
\begin{lem}
The LOG-2 routing strategy ensures that there are at most $2\log_2 N + 1$ hops between any pair of nodes.
\label{lem:hintlog_routing}
\end{lem}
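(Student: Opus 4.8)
The plan is to bound the two hop-types separately and then add. First I would record the \emph{alternating structure} of any LOG-2 route, already visible in the example route $3$-$1$-$12$-$10$-$14$-$13$-$17$-$18$: the packet repeatedly (i) takes at most one Tx-1 hop to reach the first node of its current Tx-1 group, and (ii) then takes one Tx-2 hop (a ``jump'') to a node in a new Tx-1 group, terminating with at most one Tx-1 delivery hop inside the destination's Tx-1 group. Hence, if $k$ denotes the number of Tx-2 jumps, the number of Tx-1 hops is at most $k+1$ (one preceding each jump, plus the final delivery), so the total hop count is at most $2k+1$. It therefore suffices to show $k \le \log_2 N$; I would in fact aim for the stronger $k \le \log_2 M$ and then use $M \le N = M\log_2 M$ to conclude $2k+1 \le 2\log_2 M + 1 \le 2\log_2 N + 1$.

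To bound $k$ I would exploit the geometric (Chord-like) structure of the Cover Sets. For a node currently holding the packet in Tx-1 group $g$, let $d$ be the cyclic forward distance (in group indices, modulo $M$) to the destination's Tx-1 group. By the Property preceding the Cover Set definition, the successive nodes of a Tx-2 group sit in Tx-1 groups whose indices advance by $2^0, 2^1, 2^2, \ldots$; reading this through the definition of $\mathcal{U}_{ij}$, the level-$i$ cover set of group $g$ occupies exactly a band of forward distances of width $\approx 2^{i-1}$, and these bands partition $\{0, 1, \ldots, M-1\}$ across $i = 1, \ldots, \log_2 M$. When the destination lies in the level-$i$ band, the jump moves the packet to the first Tx-1 group of that band, so the residual forward distance $d'$ drops into the span covered by the \emph{strictly lower} levels. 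The key invariant I would establish is therefore that the cover-set level used strictly decreases from one Tx-2 jump to the next; since the levels lie in $\{1, \ldots, \log_2 M\}$, this would yield $k \le \log_2 M$.

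I expect the main obstacle to be the boundary bookkeeping that keeps the ``strictly decreasing level'' invariant from being literally true. The $-1$ offset in the group index $((j-1+2^{i-1})\bmod M)$ shifts every band down by one, and the top set $\mathcal{U}_{\log_2 M,\, j}$ is deliberately enlarged to $2^{\log_2 M - 1}+1$ elements so that the bands still tile $\{0, \ldots, M-1\}$; both effects create edge distances at which a jump lands the packet at the \emph{bottom} of the same band rather than in a lower one, so the level can fail to decrease for a bounded number of steps, and the cyclic ``mod $M$'' arithmetic must be tracked carefully throughout. I would handle this by a case analysis of $d' = d - (2^{i-1}-1)$ at the band boundaries, showing that every non-decreasing step is immediately followed by termination or by a genuine decrease. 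Crucially, the Lemma asks only for $2\log_2 N + 1$ rather than the tighter $2\log_2 M + 1$, and the gap $\log_2 N - \log_2 M = \log_2\log_2 M \ge 1$ (which holds whenever $\log_2 M \ge 2$, i.e.\ for all nontrivial instances) is exactly the slack needed to absorb these $O(1)$ extra jumps; so even where the clean $k \le \log_2 M$ fails by a constant at small or boundary configurations, the stated bound still goes through.
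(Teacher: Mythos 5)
Your proposal is correct and takes essentially the same route as the paper: the paper's own (sketched) proof rests exactly on the observation that each Tx-2 transmission at least halves the remaining cyclic distance to the destination's Tx-1 group, which is precisely what your strictly-decreasing cover-set-level invariant expresses (the level used is the dyadic scale of the remaining distance), combined with the same alternating-hop count of $2k+1$ for $k$ Tx-2 jumps. The paper omits the full details ``for brevity,'' so your boundary-case analysis (a repeated level forces the residual distance to $0$, or to $1$ at the top level, with the $\log_2 N$ versus $\log_2 M$ slack absorbing these steps) is simply a more careful write-up of the same argument.
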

\begin{proof} 
The proof is based on the observation that at each step in a Tx-2 
transmission, the distance between the node holding the packet and the destination
 decreases by \emph{at least} half. The full proof is omitted for brevity.
\end{proof}

\subsubsection{Throughput Analysis of LOG-2}
\label{section:hint_log_analysis}

We now show that the LOG-2 assignment along with the scheduling and routing strategy as described above
can achieve a throughput of ${1}/(\log_2M)^2$ for every node. 

\begin{thm} 
For $N = M\log_2M$ where $\log_2 M \in \mathbb{Z}^+$,  
LOG-2 can achieve a uniform per node throughput of ${1}/(\log_2M)^2$ using $2M$ channels.
\label{thm:hintlog_thruput}
\end{thm}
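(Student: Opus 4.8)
The plan is to mirror the proof of Theorem~\ref{thm:hint2_thruput}: exploit the symmetry of the LOG-2 construction to reduce the feasibility question to the load on a few representative transceivers, and then verify that under the stated scheduling each such load stays within its allotted share of channel capacity when $\lambda = 1/(\log_2 M)^2$. The first step is to pin down the right symmetry. I claim the cyclic shift of the Tx-1 and Tx-2 group indices modulo $M$ (together with the induced relabelling of nodes, which preserves each node's within-group level) is an automorphism of the entire assignment: by the structural property noted above, the $i$-th node of Tx-2 group $j$ lies in Tx-1 group $((j-1+2^{i-1})\bmod M)$, and sending $j\mapsto j+1$, $g\mapsto g+1$ carries it to the $i$-th node of Tx-2 group $j+1$ while keeping its level $i$ fixed. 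Since the uniform random traffic is invariant under any relabelling of nodes and the routing is equivariant, the expected load on a transceiver depends only on the node's level $p\in\{1,\dots,\log_2 M\}$, not on which group it sits in. Hence it suffices to bound, for each level $p$, the Tx-1 load of a level-$p$ node and, for $p=1$ only (the scheduling lets only first nodes transmit on Tx-2), the Tx-2 load of a first node.

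Next I would identify the binding constraint, which is the Tx-2 load of the first nodes. Each injected packet whose destination lies outside its source's Tx-1 group is routed by alternating Tx-1 hops (to reach a group's first node) and Tx-2 jumps, and by the halving property underlying Lemma~\ref{lem:hintlog_routing} it makes at most $\log_2 M$ Tx-2 jumps. Every Tx-2 jump is executed by the first node of some Tx-2 group, so the total number of Tx-2 transmissions per slot is at most $N\lambda\log_2 M$; by the level-symmetry just established these are shared equally among the $M$ first nodes, so a first node carries an expected Tx-2 load of at most $N\lambda\log_2 M/M = (\log_2 M)^2\lambda$. Since a first node is allotted the entire capacity of its Tx-2 channel (one packet per slot), this load is feasible precisely when $\lambda\le 1/(\log_2 M)^2$, which fixes the achievable rate. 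A first node's Tx-1 load is negligible here, since it originates only its own (rare) same-group traffic and never performs the ``forward-to-first-node'' hop.

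Finally I would check the Tx-1 budget on the non-first nodes, each allotted a $1/\log_2 M$ share of its Tx-1 channel. A level-$p$ node incurs Tx-1 transmissions from three sources: relaying its own externally-destined traffic to its first node (rate $\approx\lambda$), forwarding packets it receives as a Tx-2 landing node on to its first node, and delivering packets whose destination lies in its own Tx-1 group. The delicate point is that the cover sets grow geometrically ($|\mathcal{U}_{ij}|=2^{i-1}$), so high-level nodes are the landing point for a disproportionately large share — up to a constant fraction — of the relayed packets; I would bound this landing traffic by observing that the packets based at any fixed first node which jump to level $p$ are at most a $2^{p-1}/M$ fraction of that group's throughput, and then sum the resulting forwarding and delivery contributions over the recursion levels. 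The upshot is that every level-$p$ node's Tx-1 load is $O(\lambda\log_2 M)=O(1/\log_2 M)$ and in fact stays below the $1/\log_2 M$ allotment at $\lambda = 1/(\log_2 M)^2$, so neither transceiver is overloaded. Combining the two checks shows $\lambda = 1/(\log_2 M)^2$ is supportable; since the assignment uses $M$ Tx-1 plus $M$ Tx-2 channels, it does so with $2M$ channels, proving the claim. I expect this last Tx-1 accounting on the heavily loaded high-level nodes — rather than the clean Tx-2 bottleneck — to be the main obstacle to making the argument fully rigorous.
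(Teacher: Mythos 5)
Your symmetry reduction and your Tx-2 accounting are sound; in fact the Tx-2 argument (at most $\log_2 M$ Tx-2 jumps per packet, hence total Tx-2 traffic at most $N\lambda\log_2 M$, split by the cyclic symmetry among the $M$ first nodes, each owning its full Tx-2 channel) is more explicit than what the paper itself writes down. The genuine gap is in your Tx-1 accounting, exactly where you predicted trouble: the per-level landing bound you propose is structurally wrong. The fraction of traffic based at a first node that jumps to level $p$ is proportional to $2^{p-1}/M$ only for \emph{freshly injected} packets, whose destination group is uniform over all $M$ groups; relayed packets arrive with a remaining group-distance that has already been at least halved, so they re-jump only to \emph{low} levels and essentially never to the top level. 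Carrying out your plan fails under either reading of ``that group's throughput.'' If you apply the $2^{p-1}/M$ fraction to the group's full Tx-2 throughput (which is $\approx\lambda(\log_2 M)^2$, since the Tx-2 channel is nearly saturated at the claimed rate), the top-level node's landing load is bounded only by $\tfrac{1}{2}\lambda(\log_2 M)^2$, which at $\lambda = 1/(\log_2 M)^2$ is a constant $\approx 1/2$ and blows the $1/\log_2 M$ equal-time allotment; this route proves only $\lambda = O\big(1/(\log_2 M)^3\big)$. If instead you apply the fraction only to the group's fresh traffic ($\approx\lambda\log_2 M$), you get the right order for high levels but have ignored the relayed re-jumps, which dominate the landing traffic at low levels, so the bound is invalid there.

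What actually saves the scheme, and what your sketch misses, is a cancellation: the number of source-group/destination-group flows that land at level $p$ of a fixed Tx-2 group is about $M/2$ \emph{independently of $p$}. A level-$p$ landing node covers $\approx 2^{p-1}$ destination groups, but only the $\approx M/2^p$ source groups sitting at offsets that are multiples of $2^p$ route through it, and these two factors cancel. Hence every landing node carries landing traffic $\approx\tfrac{1}{2}\lambda\log_2 M$ regardless of level (your intuition of a ``disproportionately large share'' on high-level nodes is false), so every node's Tx-1 load is $\approx\lambda+\tfrac{1}{2}\lambda\log_2 M$, which does fit the $1/\log_2 M$ allotment at $\lambda = \Theta\big(1/(\log_2 M)^2\big)$. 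The paper's Appendix D avoids this per-node bookkeeping entirely: it bounds the \emph{aggregate} load on one Tx-1 group's channel by counting group-to-group flows through it via a path-length sum ($\Delta_1 \leq M(\log_2 M + 1)$ flows, each of rate $(\log_2 M)^2\lambda/(N-1)$) and requiring the total not to exceed the channel capacity of $1$. That aggregate argument is the cleaner route (and, like any correct accounting here, it really establishes the rate up to a constant factor, i.e., $\Theta\big(1/(\log_2 M)^2\big)$); your per-node route can be repaired, but only with the balanced landing count above, not the geometric one you proposed.
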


\begin{proof}
The proof is based on the following observations. Because of the symmetry of the assignment, it is 
sufficient to focus on the total load on the nodes in the first Tx-1 group and the first Tx-2 group.
Then we calculate a bound on the total number of group-to-group traffic flows
that involve these groups. This is used to show that a per node input rate of ${1}/(\log_2M)^2$
is feasible. The full proof is  provided in Appendix D.
\end{proof}

\subsection{Discussion of LOG-2}
\label{section:hint_log_discussion}

Theorem \ref{thm:hintlog_thruput} shows that LOG-2 can achieve a per node throughput of at least
${1}/{(\log_2(N))^2}$ with only $2$ transceivers per node.
Thus, we can get close to $\Theta(1)$ per node throughput with $O(1)$ transceivers per node.
However, this answers the first question raised in Sec. \ref{section:hintT_discussion}
only partially. As we show in the next section, under the Parking Lot model of Sec. \ref{section:network_model},
it is necessary to have $\Omega(N)$ channels to get $\Theta(1)$ 
per node throughput irrespective of the number of transceivers per node. 
However, given $\Omega(N)$ channels, it is not clear if one can achieve 
$\Theta(1)$ per node throughput using $O(1)$ transceivers per node under static channel assignment.
It is our conjecture that this is not possible.

\section{Efficiency of a Policy}
\label{section:tradeoff}

Let $\mathcal{P}$ denote the set of all possible feasible policies for channel assignment, 
routing, and scheduling under the network model described in Sec. \ref{section:network_model}. 
We show that there exists a simple relation between the total number of channels used, network size, 
and maximum per node throughput achievable under any policy $p \in \mathcal{P}$.
 Let $C_p$ denote the number of channels used by the channel assignment under $p$. 
 Let $\bar{L}_p$ denote the resulting average path length traversed by a packet under $p$. 
This average is taken over all source-destination pairs. 
 Finally, let $\lambda_p$ be the maximum per node input rate that can be supported under $p$.
Then, since the total traffic load cannot exceed the total transmission capacity of the network, we must have that
\begin{align}
N \lambda_p \bar{L}_p \leq C_p.
\label{eq:tradeoff}
\end{align}
The left hand side of (\ref{eq:tradeoff}) denotes the time average total number of transmissions 
required to deliver packets from all sources to all destinations. 
This cannot exceed $C_p$, the maximum number of transmissions possible per unit time.
 Since  $\bar{L}_p \geq 1$ under any $p$, it follows from  (\ref{eq:tradeoff}) that $C_p = \Omega(N)$ when
 $\lambda = \Theta(1)$. 
  
Note that (\ref{eq:tradeoff})  is a \emph{necessary} condition for the feasibility of any input rate $\lambda$ 
and can be used to establish a lower bound on the performance of any policy as shown next. 
\begin{defn}
Consider  any channel assignment, routing, and scheduling policy $p \in \mathcal{P}$ that uses $C_p$ channels in a Parking Lot network of size $N$. 
Suppose it can support a maximum per node input rate of $\lambda_p$. 
Then the efficiency of $p$, $\eta_p$, is defined as:
\begin{align}
\eta_p \defequiv \frac{N \lambda_p}{C_p}
\label{eq:gamma}
\end{align}
\label{defn:gamma}
\end{defn}
Since $\bar{L}_p \geq 1$ under any $p$, using (\ref{eq:tradeoff}) it follows that $\eta_p \leq 1$ for all $p \in \mathcal{P}$.
This includes the throughput maximizing policy obtained by solving the 
optimization problem for the tuple $(N, T, C_p)$ as discussed in Sec. \ref{section:network_model}.
Therefore, the maximum per node throughput achievable under a given policy $p$ that uses $C_p$ channels in a network of size $N$
is within a factor  $\eta_p$ of the maximum per node throughput achievable under \emph{any} policy  that uses $C_p$ channels
on the same network. Thus, the efficiency of a policy establishes a lower bound on its performance.

Using the results in Sec. \ref{section:hint} and \ref{section:towards}, the efficiency of the HINT-T and LOG-2 strategies can be calculated. 
Since HINT-T uses $TN^{\frac{T-1}{T}}$ channels to support per node rate of ${1}/{N^{\frac{1}{T}}}$, we have:
\begin{align}
\eta_{\textrm{HINT-T}} = \frac{N}{N^{1/T} \times TN^{\frac{T-1}{T}}} = \frac{1}{T}
\label{eq:gamma_hintT}
\end{align}
For a fixed $T$, we note that $\eta_{\textrm{HINT-T}}$ is \emph{independent} of the network size $N$. 
Thus, HINT-T is within a constant factor of the optimal solution.

Likewise, for $N = M \log_2 M$, LOG-2 uses $2M$ channels to support per node rate of $1/(\log_2M)^2$. This yields:
\begin{align}
\eta_{\textrm{LOG-2}} = \frac{N}{(\log_2M)^2 2M} = \frac{1}{2 \log_2M} > \frac{1}{2 \log_2N}
\label{eq:gamma_hintlog}
\end{align}
Thus, the LOG-2 scheme is within a logarithmic (in network size) factor of the optimal solution.
This may suggest that HINT-T has a better performance than LOG-2. However, note that for any given $T$,
the maximum per node throughput under HINT-T is $1/N^{1/T}$ while LOG-2 can achieve at least
${1}/{(\log_2 N)^2}$ with $T=2$ which exceeds  $1/N^{1/T}$ for sufficiently large $N$.

\section{Simulation-Based Evaluation}
\label{section:eval}

We now present simulation-based evaluation of HINT-T and LOG-2.
We also compare their performance against two representative schemes that we call
RING and GRID. These are described in the next section. 
In our simulations,  we consider a Parking Lot network with $4$ transceivers per node. 
Our simulations are performed using OPNET \cite{opnet}.

\subsection{RING and GRID Channel Assignment}
\label{section:ring_grid}

The RING assignment scheme first forms $N/4$ Tx-1 groups, each of size $4$ and containing consecutively numbered nodes. 
The other transceiver groups are obtained by shifting node ids in each Tx-1 group by $2, 3$ and $4$ respectively, 
with wraparound. Fig. \ref{fig:ring_example_table} shows this for $N=16$. The resulting topology resembles a ring 
(similar to Example $1$ in Fig. \ref{fig:example_assignment}), hence the name. 
The GRID scheme places nodes in a $2$-dim degree $4$ torus grid and assigns a channel to each link in the resulting graph.
Thus, each Tx group contains $2$ nodes.

Under the RING assignment, it is easy to show that the average path length over all source-destination pairs is $\Theta(N)$. Similarly, 
the total number of channels used is $N = \Theta(N)$.  Thus, using (\ref{eq:tradeoff}), it follows that the maximum
per node throughput under RING is $O(1/N)$. Further, using (\ref{eq:gamma}), we have that its efficiency is $\eta_{RING} = O(1/N)$.
Under the GRID assignment, the average path length over all source-destination pairs can be shown to be $\Theta(\sqrt{N})$ while 
the total number of channels used is $2N = \Theta(N)$.  Thus, using (\ref{eq:tradeoff}), it follows that the maximum
per node throughput under GRID is $O(1/\sqrt{N})$. Further, using (\ref{eq:gamma}), we have that 
its efficiency is $\eta_{GRID} = O(1/\sqrt{N})$.

\subsection{Simulation Setup}
\label{section:sim_setup}

\begin{figure}
\centering
\includegraphics[height=2.5cm, angle=0]{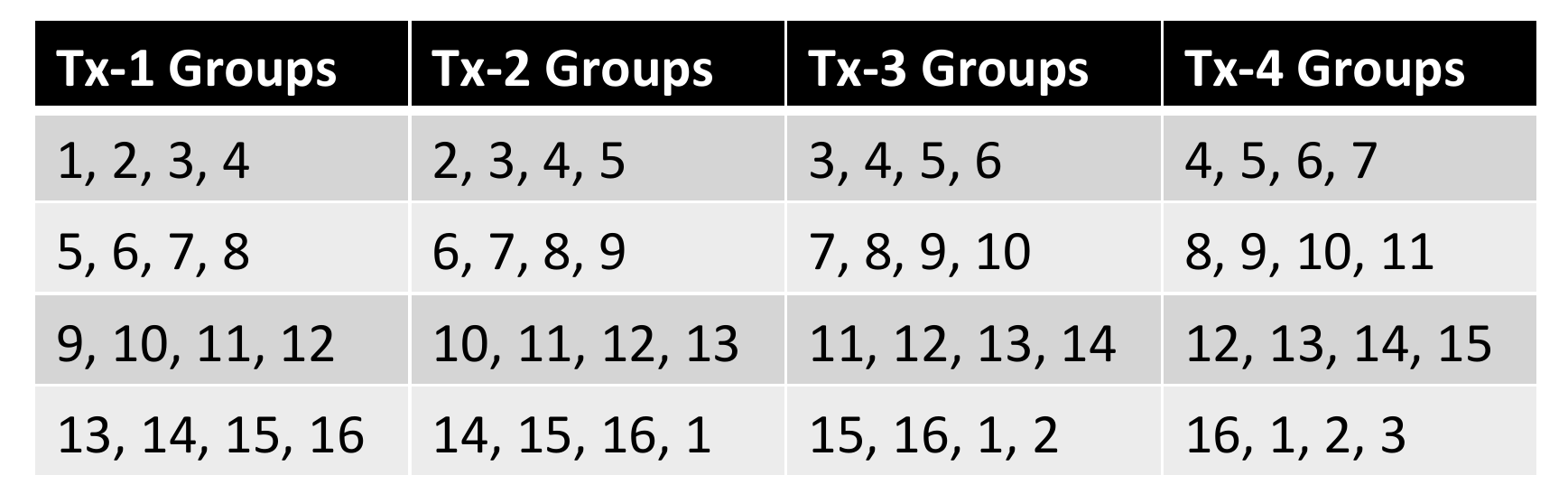}
\caption{Channel Assignment under RING for $N = 16$.}
\label{fig:ring_example_table}
\end{figure}

The input topology is a Parking Lot MC-MR network of size $N$ nodes where $N$ is varied between $16$ and $100$.
These nodes are placed in a $250 m \times 250 m$ area and use a fixed transmit power such that all nodes are within 
the transmission range of each other. Every node has $4$ identical transceivers, each capable of operating on a
channel of bandwidth $10$ MHz. We assume that such channels are available starting from $900$ MHz. 
Each transceiver independently uses 802.11 CSMA for medium access on its assigned frequency channel. 
The raw MAC level throughput per channel is $1.2$ Mbps.

Under each scheme, the assignments to the different transceivers are determined at the start of a simulation run and
fixed for the duration of that run. In each run, traffic is generated using the uniform all-pair unicast model.
Specifically, every node generates packets according to a Poisson process of fixed rate and each packet is equally likely to be
destined to any of the other nodes. We assume that all packets are fixed size UDP packets of length $436$ bytes that includes
control headers. Each simulation run has a duration of $150$ seconds at the end of which we count the total number of packets 
delivered successfully to their destinations. We use the default link state routing protocol available in OPNET for routing packets. 
We do not implement the load balanced routing strategies under HINT and LOG as discussed in the earlier sections.
Thus, we expect the achievable performance of HINT and LOG with load balanced routing to outperform what we report here.

Our objective is to compare the maximum  achievable per node throughput  under each scheme. In addition, we are interested in 
comparing it against the theoretical bounds. Given a network size $N$ and an assignment scheme $X$, we estimate 
it as follows. 
Simulations are run with increasing values of the input rate until when the total number of delivered packets does not increase anymore. 
We call this the ``saturation throughput'' under $X$ for a given $N$ and use it as a measure of the 
maximum per node achievable throughput under $X$ for $N$.
Note that the theoretical bounds for maximum throughput are derived under idealized assumptions such as 
collision free transmission scheduling, load balanced routing, no buffer overflows, and ignoring any control overheads (such as due to
CSMA and link state routing updates). This no longer holds in the simulations which are closer to the realistic setting.
However, these factors affect all of the schemes being compared and the  saturation throughput can be thought of as 
a measure of the remaining effective capacity. 


\begin{table}
\scriptsize
\centering
 \begin{tabular}{|c|c|c|c|} \hline
 Algorithm             &   $\lambda_{p}$                              &   $C_{p}$           &    $\eta_p = \frac{N \lambda_{p}}{C_{p}}$  \\ \hline
RING                   &   $O\big(\frac{1}{N}\big)$                     &   $O(N)$             &    $O\big(\frac{1}{N}\big)$   \\ \hline
GRID                   &   $O\big(\frac{1}{\sqrt{N}}\big)$            &   $O(N)$             &    $O\big(\frac{1}{\sqrt{N}}\big)$  \\ \hline
2$\times$HINT-2 &   $O\big(\frac{1}{\sqrt{N}}\big)$            &	  $O(\sqrt{N})$    &    $O(1)$ \\ \hline
2$\times$LOG-2  &   $O\Big(\frac{1}{(\log_2{N})^2}\Big)$   &   $O\big(\frac{N}{\log_2{N}}\big)$  &  $O\big(\frac{1}{\log_2{N}}\big)$  \\ \hline
HINT-4                 &  $O\Big(\frac{1}{N^{1/4}}\Big)$             &    $O(N^{3/4})$    &  $O(1)$    \\ \hline
 \end{tabular}
\caption{Summary of Theoretical performance bounds.}
\label{table:efficiency}
\end{table}

In this setup, we implement the following: RING, GRID, 
2$\times$HINT-2, 2$\times$LOG-2, and HINT-4. 
2$\times$HINT-2 implements HINT-2 on transceivers $1, 2$ and repeats it on transceivers $3, 4$.
Similarly, 2$\times$LOG-2 implements LOG-2 on transceivers $1, 2$ and repeats it on transceivers $3, 4$.
Table \ref{table:efficiency} summarizes the theoretical performance  bounds for these schemes.

\subsection{Simulation Results and Discussion}
\label{section:sim_results}

We first compare these schemes in terms of their saturation throughput.
Fig. \ref{fig:sim_results}(a) plots the per node saturation throughput  (in packets/sec, pps) vs. N. 
As can be seen, HINT-4 outperforms the RING scheme by $200$-$300\%$
for high tens of nodes, and outperforms its nearest rival GRID by nearly $150\%$ at $N=100$.
The behavior of the curves is consistent with the theoretical bounds with RING showing the largest drop as $N$ increases.  
Note that the per node throughput  under 2$\times$LOG-2 scales as $\Theta(1/(\log_2 N)^2)$ which is the best scaling performance 
among all of these schemes. However, in the finite range of $N$ considered here, HINT-4 outperforms 2$\times$LOG-2. Both GRID and 
2$\times$HINT-2 also outperform 2$\times$LOG-2 up to a crossover point, after which 2$\times$LOG-2 is better. We expect
2$\times$LOG-2 to eventually outperform HINT-4 as well after a sufficiently large $N$.
 
We note that GRID generally has a better performance than both 2$\times$HINT-2 and 2$\times$LOG-2. 
However, this plot only considers the
per node saturation throughput and does not capture the total number of channels used.  To incorporate this, 
we next compare these schemes in terms of their efficiency.
Fig. \ref{fig:sim_results}(b) plots the total network saturation throughput vs. the number of channels used under each scheme. 
Recall that efficiency of a scheme is the ratio between maximum total network 
throughput and the total number of channels used. Thus, the slope of the performance curve for a scheme in 
Fig. \ref{fig:sim_results}(b) can be interpreted as its efficiency at that point.
It can be seen that both 2$\times$HINT-2 and 2$\times$LOG-2 have much higher efficiency than GRID. Further, this gap
increases with $N$.

Fig. \ref{fig:sim_results}(b)  also agrees quite well with the theoretical bounds. 
2$\times$HINT-2 and HINT-4 both have a theoretical efficiency of $O(1)$, i.e., independent of $N$ (see Table \ref{table:efficiency}). 
Their curves in Fig. \ref{fig:sim_results}(b) have a slope that does not decrease with $N$. 
On the other hand, RING has an efficiency of $O(1/N)$ which goes to $0$ as $N$ increases. 
Its curve in Fig. \ref{fig:sim_results}(b) flattens to slope $0$ quickly. 
GRID has efficiency $O(1/\sqrt{N})$. Its curve also flattens, but more slowly than RING. 
Finally, 2$\times$LOG-2 has efficiency $O(1/\log_2 N)$ and its slope is even  more slower
to flatten.

\begin{figure}[t]
\centerline{\hbox{
     \subfigure[]{\includegraphics[height=3cm, width=1.87in]{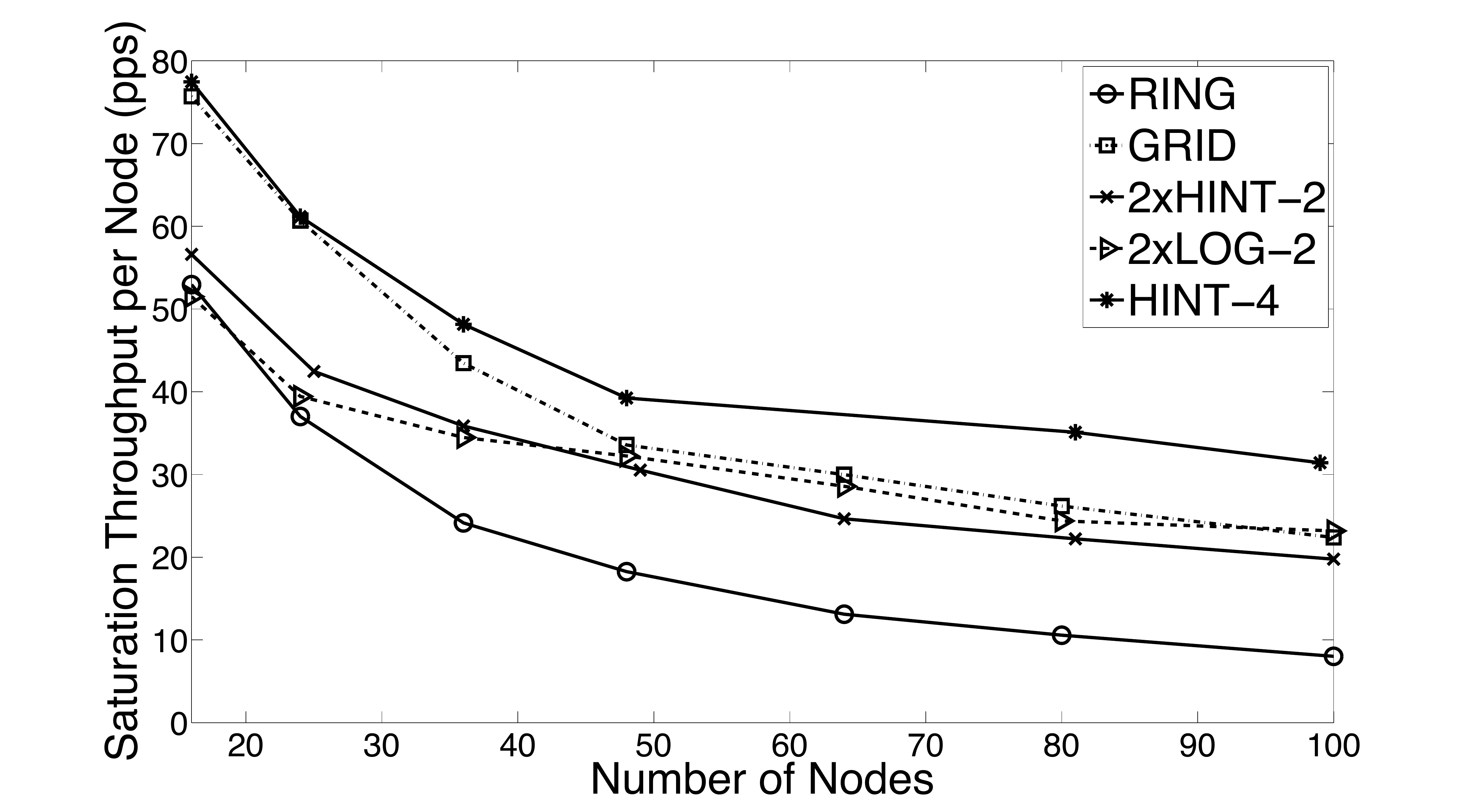}}
     \subfigure[]{\includegraphics[height=3cm, width=1.87in]{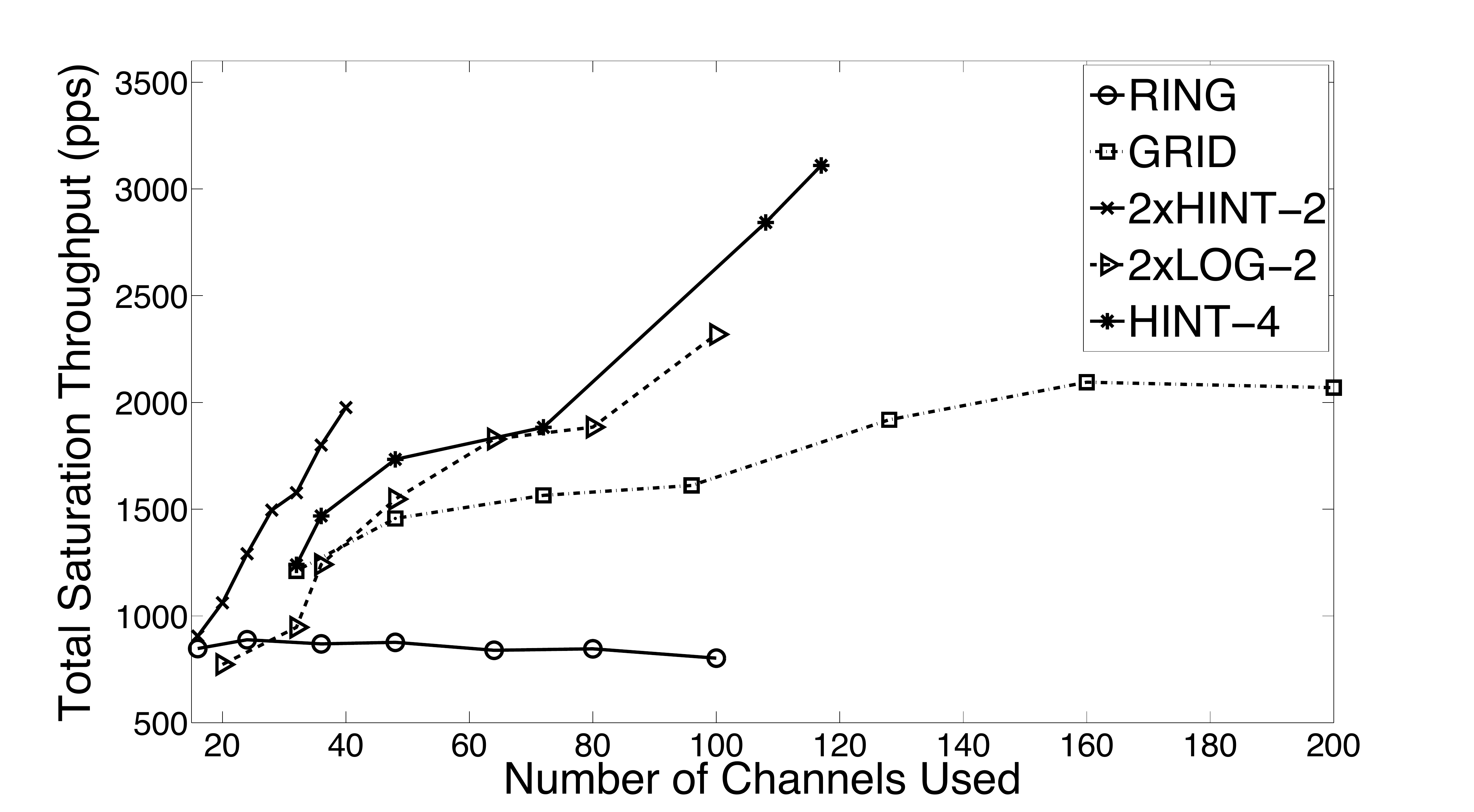}}}}
\caption{(a) Saturation Throughput per node vs. Network Size under different schemes. (b) Total Saturation Throughput 
vs. Number of Channels used under different schemes. The slope of each curve is the efficiency of that scheme at that point.}
\label{fig:sim_results}
\end{figure}

In terms of both per node throughput and efficiency, RING has the worst performance. 
Intuitively, this is because under RING, packets have to traverse $O(N)$ hops on average, 
resulting in a vast majority of traffic being relay traffic.
GRID improves upon RING because the average distance between nodes is now $O(\sqrt{N})$. It 
has similar scaling as HINT-2 in terms of throughput but uses a lot more channels. Thus, it has poor efficiency.
The HINT-T schemes have the best performance in terms of their efficiency which does not depend on $N$.
However, in terms of throughput, $T$ must be large to remain close to $O(1)$ as $N$ increases.
LOG-2 can achieve this with just $2$ transceivers. However, its efficiency is not as good as HINT-T.
 Thus, both HINT and LOG are order optimal or close to order optimal along one of the dimensions (throughput or efficiency).

\section{Conclusions and Future Work}
\label{section:conc}

\subsection{Implications for Network Architecture}
\label{section:implications}

Our results have several  interesting implications on the architecture of dense MC-MR networks.
The scaling laws in \cite{GK} imply that in a random ad hoc network, if individual transceivers can only operate over channels 
of fixed bandwidth (that does not increase with $N$), 
then the best possible scaling is $O(1/\sqrt{N})$ even if the network has a large number of channels available
 and each node has multiple (but finite) transceivers. 
The idea behind the optimal strategy in \cite{GK} is to keep transmit power sufficiently low, just enough to ensure connectivity,
 and use multi-hop routing. The motivation behind reducing the transmit power is to maximize spatial reuse of the finite network bandwidth.
 
In contrast, our results show that in the Parking Lot model, it is possible to get significantly 
higher throughput if nodes have multiple transceivers and the number of available  channels 
scales with the network size.
Our results also suggest that in this model, reducing power to maximize spatial reuse of frequency does not help. It is better to preserve the Parking Lot
structure and use all available channels while keeping inter-node distance small in the resulting effective topology by careful channel assignment.

\subsection{Open Questions}
\label{section:open}

There are several open questions that offer directions for future work.
For example, it is not clear if one can achieve 
$\Theta(1)$ per node throughput using $O(1)$ transceivers per node under static channel assignment. Similarly, it is not
clear which algorithm is the best given a particular $(N, T, F)$. 
Recall that HINT-T has $\Theta(1)$ efficiency for any fixed $T$ but its throughput scales as $\Theta(1/N^{1/T})$.
On the other hand, LOG-2 has $\Theta(1/(\log_2 N)^2)$ throughput but $\Theta(1/(\log_2 N))$ efficiency. Thus,
it would be interesting to design a ``universal'' channel assignment policy that has both high throughput and high efficiency for finite $T$ 
at \emph{all} points of the curve (\ref{eq:tradeoff}). 


\section*{Appendix A \\ Proof of Theorem \ref{thm:hint2_thruput}}
\label{section:appxA}

Because of the symmetry of the assignment, it is sufficient to focus on the total load on each of the two transceivers of node $1$ and
show that it can be supported. First, consider the second transceiver of node $1$. On this, 
node $1$ transmits those packets generated by itself that are  destined for nodes in all Tx-1 groups other than its own group. 
There are $M-1$ such groups and each group has $M$ nodes. 
Node $1$ generates packets at rate $\frac{\lambda}{N-1}$ for each of these nodes.
Thus, the total traffic load on the second transceiver of node $1$ is given by:
\begin{align}
\frac{\lambda M (M -1)}{N-1}
\label{eq:hint2_1}
\end{align}
This must be less than the total transmission rate $1/M$. 
Solving this, we get:
\begin{align}
\lambda \leq \frac{N-1}{M^2(M - 1)} = \frac{M^2-1}{M^2(M - 1)} = \frac{1}{M} + \frac{1}{M^2}
\label{eq:hint2_2}
\end{align}
From this, it can be seen that $\lambda = 1/M$ satifies (\ref{eq:hint2_2}).

Next consider the first transceiver of node $1$. 
On this, node $1$ transmits packets generated by itself that are destined for all the other nodes in its own Tx-1 group. 
There are $M - 1$ such nodes. In addition, node $1$ transmits the relay packets that it received for these nodes from its neighbors in its Tx-2 group.
There are $M-1$ such neighbors.
Thus, the total traffic load on the first transceiver of node $1$ is given by:
\begin{align}
\frac{\lambda (M -1)}{N-1} + \frac{\lambda (M -1) (M -1)}{N-1}  = \frac{\lambda M(M-1)}{N - 1}
\label{eq:hint2_3}
\end{align}
This is same as (\ref{eq:hint2_1}). 
Again, this must be less than the total transmission rate $1/M$.
Using the same steps as before, we have that
$\lambda = 1/M$ is feasible for this case as well. 
Since HINT-2 uses $2M$ channels and $M=N^{1/2}$, Theorem \ref{thm:hint2_thruput} follows.


\section*{Appendix B \\ Proof of Lemma \ref{lem:hintT_routing}}
\label{section:appxB}



To show Lemma \ref{lem:hintT_routing}, it is sufficient to show that 
$k(r^*, d) < k(s, d)$. This is because this means $k(r, d)$ will
eventually become $1$ and that means $r$ is in the same group as $d$. Further, this will take
no more than $T$ hops since $k(s, d) \leq T$.

Consider the Tx-k(s, d) group of node $s$. By definition, $k(s,d)$ is the smallest $k$ for which 
there exists a set $\mathcal{S}_{ik}$ such that both $s$ and $d$ are in $\mathcal{S}_{ik}$. Denote
this set by $\mathcal{S}_{j,k(s,d)}$.
Note that this set is a union of $M$ disjoint sets from level $k(s,d) - 1$. Further,
the Tx-k(s, d) group of node $s$ contains $M$ nodes (including $s$), one from each of 
these $M$ sets from level $(k(s,d) - 1)$.  This follows by the HINT-T construction.

Now the level $(k(s,d) - 1)$ set that contains $s$ cannot have $d$. Otherwise, $k(s, d)$ 
cannot be the smallest $k$ for which
there exists a set $\mathcal{S}_{ik}$ such that both $s$ and $d$ are in $\mathcal{S}_{ik}$.

However, there must be one set of the remaining $M-1$ level $(k(s,d) - 1)$ sets that contains
$d$. Othwerise, $d$ cannot be in $\mathcal{S}_{j,k(s,d)}$, a contradiction.
Now there is one node from this set in the Tx-k(s, d) group of node $s$. Call it $n$. Since
$d$ is in the same level $(k(s,d) - 1)$ set as $n$, we must have that $k(n, d) \leq k(s,d) - 1$.
Thus, $k(r^*, d) < k(s, d)$.

\section*{Appendix C \\ Throughput Analysis of HINT-T}}
\label{section:appxC}


Because of the symmetry of the CA and routing strategy across nodes, it is sufficient to focus on 
the total load on each of the $T$ transceivers of node $1$ and show that it can be supported. 

Consider the $k^{th}$ transceiver of node $1$. At this level, node $1$ is responsible for forwarding packets destined to
$(M-1)M^{k-1}$ nodes using its $(M-1)$ neighbors in its Tx-k group. 
Next, we calculate the total number of source nodes that generate packets that are destined to these nodes and that are routed
via the $k^{th}$ transceiver of node $1$. We calculate this number for each level, starting at level $k$ and going upto level $T$,
 making sure that no node is counted more than once.

At level $k$, $1$ is the only such source node. At level $k+1$ there are $M-1$ new source nodes, each a neighbor of
node $1$ in its Tx-k+1 group. At each next level, all the nodes counted so far bring $M-1$ new (previously not counted) nodes. 
Thus, at level $k+2$, we have $(1 + M - 1)(M-1) = M(M-1)$ new source nodes. Similarly, at level $k+3$, we have
$(1 + M - 1 + M(M-1))(M-1) = M^2(M-1)$ new source nodes. In general, it can be shown that at level $k+i$,
we have $M^{i-1}(M-1)$ new soure noes. Summing from level $k$ through $T$, the total number of  source nodes is given by:
\begin{align}
& 1 + \sum_{i=1}^{T-k} M^{i-1}(M-1) = M^{T-k}
\label{eq:apdxA_1}
\end{align}

Thus, the total traffic load on the $k^{th}$ transceiver of node $1$ is given by:
\begin{align}
\frac{\lambda M^{T-k} (M-1)M^{k-1}}{M^T -1} = \frac{\lambda (M-1)M^{T-1}}{M^T -1}
\label{eq:apdxA_2}
\end{align}
This cannot exceed the available  transmission rate ${1}/{M}$. 
Solving this, we get
\begin{align}
\lambda \leq \frac{M^T - 1}{M^T (M-1)}
\label{eq:apdxA_3}
\end{align}
It can be seen that $\lambda = 1/M$ satisfies this.


\section*{Appendix D \\  Throughput Analysis of LOG-2}
\label{section:appxD}


Because of the symmetry of the assignment, it is sufficient to focus on the total load on the nodes in the first
Tx-1 group and the first Tx-2 group.

We calculate the total number of group-to-group flows that involve the 
first Tx-1 group (as a source, destination, or relay).

Consider all the traffic flows that originate in the first Tx-1 group and are destined to all nodes. 
Each flow involves a sequence of transmissions where each transmission takes place either in a Tx-1 group
or a Tx-2 group. Note that no group occurs more than once in any routing path.

Let $\Delta_i$ denote the sum of the path lengths (measured in the units hops or transmissions involving only  Tx-1 groups)
of flows originating in the $i^{th}$ Tx-1 group and destined to nodes in all other groups.
Since the paths for flows originating in any other Tx-1 group can be obtained by simply shifting the 
paths for flows originating in the first Tx-1 group, we have that $\Delta_i = \Delta_1$ for all $i$.
Further, the sum of the number of times the first Tx-1 group occurs across all the paths is also equal to $\Delta_1$.
Finally, this equals the total number of 
group-to-group flows that involve the 
first Tx-1 group (as a source, destination, or relay).

Next, note that the maximum path length (measured in the units hops or transmissions involving only  Tx-1 groups)
is $(\log_2 M + 1)$. This implies that $\Delta_1 \leq M (\log_2 M + 1)$. Thus, the total load on the first Tx-1 group  is upper bounded by

\begin{align}
M (\log_2 M + 1) \lambda_g = M (\log_2 M + 1) \frac{(\log_2 M)^2 \lambda}{M\log_2M - 1}
\label{eq:apdxB_1}
\end{align}
Since this cannot exceed the total transmission capacity of the  first Tx-1 group,we must have that

\begin{align}
 M (\log_2 M + 1) \frac{(\log_2 M)^2 \lambda}{M\log_2M - 1} \leq 1
\label{eq:apdxB_2}
\end{align}

It can be seen that $\lambda = 1/(\log_2 M)^2$ satisfies this.


\begin{thebibliography}{10}
\bibitem{GK}
P. Gupta and P. R. Kumar. The capacity of wireless networksâ
\emph{IEEE Trans. Inf. Theory}, vol. 46, no. 2, pp. 388-404, Mar. 2000.


\bibitem{aeron}
S. Aeron and V. Saligrama. Wireless ad hoc networks: Strategies and scaling laws for the fixed snr regime. 
\emph{IEEE Trans. Inf. Theory}, vol. 53, no. 6, pp. 2044-2059, Jun. 2007.

\bibitem{ozgur}
A. Ozgur, O. Leveque, and D. Tse. 
Hierarchical cooperation achieves optimal capacity scaling in ad hoc networks.
\emph{IEEE Trans. Inf. Theory}, vol. 53, no. 10, pp. 3549-3572, Oct. 2007.


\bibitem{infra}
B. Liu, P. Thiran, and D. Towsley. Capacity of a wireless ad hoc network with infrastructure. 
In \emph{Proc. ACM MobiHoc}, Sep. 2007.


\bibitem{ulas}
U. Kozat and L. Tassiulas. Throughput capacity of random ad hoc networks with infrastructure support. 
In \emph{Proc. ACM MobiCom}, Jun. 2003.


\bibitem{mobility}
M. Grossglauser and D. Tse. Mobility increases the capacity of ad hoc wireless networks.  
\emph{IEEE/ACM Trans. Netw.}, vol. 10, no. 4, pp.477-486, Aug. 2002.


\bibitem{neely}
M. J. Neely and E. Modiano. Capacity and delay tradeoffs for ad-hoc mobile networks. 
\emph{IEEE Trans. Inf. Theory}, vol. 51, no. 6, pp. 1917-1937, Jun. 2005.


\bibitem{gaussian}
S. Li, Y. Liu, and X. Li.  Capacity of large scale wireless networks under gaussian channel model. 
In \emph{Proc. ACM MobiCom}, 2008.


\bibitem{luna}
Z. Wang, H. R. Sadjadpour, and J. J. Garcia-Luna-Aceves. 
A unifying perspective on the capacity of wireless ad hoc networks. 
\emph{Proc. IEEE INFOCOM}, Apr. 2008.


\bibitem{NOW1}
F. Xue  and P. R. Kumar. Scaling laws for ad-hoc wireless networks: An information theoretic approach.
\emph{Foundations \& Trends in Netw.}, vol. 1, no. 2, 2006.


\bibitem{NOW2}
A. Ozgur, O. Leveque, and D. Tse. Operating regimes of large wireless networks. 
\emph{Foundations \& Trends in Netw.}, vol. 5, no. 1, 2009.

\bibitem{kodialam}
M. Kodialam and T. Nandagopal. Characterizing the capacity region in multi-radio multi-channel wireless mesh networks. In \emph{Proc. ACM MobiCom}, Aug. 2005.


\bibitem{jsac}
M. Alicherry, R. Bhatia, and L. E. Li. Joint channel sssignment and routing for throughput optimization in multiradio wireless mesh networks.
\emph{IEEE Journal on Selected Areas in Comm.}, vol. 24, no. 11, Nov. 2006.


\bibitem{xlin}
X. Lin and S. Rasool. A distributed joint channel-assignment, scheduling and routing algorithm for multi-channel ad hoc wireless networks.
\emph{Proc. IEEE INFOCOM}, April 2007.


\bibitem{kyasanur}
P. Kyasanur and N. H. Vaidya. Capacity of multi-channel wireless networks: Impact of number of channels and interfaces. In 
\emph{Proc. ACM MobiCom}, Aug. 2005.




\bibitem{chord}
I. Stoica, R. Morris, D. Liben-Nowell, D. R. Karger, M. F. Kaashoek, F. Dabek, and H. Balakrishnan. 
Chord: A scalable peer-to-peer lookup protocol for internet applications. \emph{IEEE/ACM Trans. Netw.}, vol. 11, no. 1, pp. 17-32, Feb. 2003.

\bibitem{opnet}
OPNET Modeler, http://www.opnet.com/ 



\end{thebibliography}
\end{document}